\documentclass[lettersize,journal]{IEEEtran}
\usepackage{amsmath,amsfonts}
\usepackage{array}
\usepackage{textcomp}
\usepackage{stfloats}
\usepackage{url}
\usepackage{verbatim}
\usepackage{graphicx}
\usepackage{cite}
\usepackage{pifont}
\usepackage{makecell}
\usepackage{bm}
\usepackage{bbm}
\usepackage[english]{babel}
\usepackage{amsthm}
\usepackage{color}
\usepackage{soul}

\usepackage{amssymb}
 \usepackage{accents}
  \usepackage[ruled,linesnumbered,lined]{algorithm2e}
  \usepackage{algorithmic}
    \usepackage{framed} 
    \usepackage{comment}
        \usepackage{mathtools}
        \usepackage{hyperref}
        \usepackage{subfigure}

\DeclareMathOperator*{\argmax}{arg\,max}

\newtheorem{theorem}{Theorem}
\newtheorem{definition}{Definition}

\newtheorem{assumption}{Assumption}
\newtheorem{lemma}{Lemma}
\newtheorem{corollary}{Corollary}

\hyphenation{op-tical net-works semi-conduc-tor IEEE-Xplore}
% updated with editorial comments 8/9/2021

\begin{document}

\title{Multi-Fidelity Bayesian Optimization for Nash \\Equilibria with Black-Box Utilities}

\author{Yunchuan Zhang, \IEEEmembership{Member,~IEEE}, Osvaldo Simeone, \IEEEmembership{Fellow,~IEEE}, \\and~H. Vincent Poor, \IEEEmembership{Life Fellow,~IEEE}
\thanks{ 
This work was supported in part by European Union’s Horizon Europe project CENTRIC (101096379), by the Open Fellowships of the EPSRC (EP/W024101/1), by the EPSRC project (EP/X011852/1), and by U.S. National Science Foundation under Grant ECCS-2335876.}

\thanks{Yunchuan Zhang is with National Engineering Research Center of Fiber Optic Sensing Technology and Networks, School of Information Engineering, Hubei Key Laboratory of Broadband Wireless Communication 
and Sensor Networks, Wuhan University of Technology, Wuhan 430070, China (e-mail: yunchuan.zhang@whut.edu.cn)}

\thanks{Osvaldo Simeone is with the King’s Communications, Learning and Information Processing (KCLIP) lab within the Centre for Intelligent Information Processing Systems (CIIPS), Department of Engineering, King’s College London, London, WC2R 2LS, UK. (email: osvaldo.simeone@kcl.ac.uk).}

\thanks{H. Vincent Poor is with the Department of Electrical and Computer Engineering, Princeton University, Princeton, NJ 08544 USA (e-mail: poor@princeton.edu).
}
}

% The paper headers
% \markboth{Journal of \LaTeX\ Class Files,~Vol.~14, No.~8, August~2021}%
% {Shell \MakeLowercase{\textit{et al.}}: A Sample Article Using IEEEtran.cls for IEEE Journals}

% \IEEEpubid{0000--0000/00\$00.00~\copyright~2021 IEEE}
% Remember, if you use this you must call \IEEEpubidadjcol in the second
% column for its text to clear the IEEEpubid mark.

\maketitle

\begin{abstract}
Modern open and softwarized systems -- such as O-RAN telecom networks and cloud computing platforms -- host independently developed applications with distinct, and potentially conflicting, objectives. Coordinating the behavior of such applications to ensure stable system operation poses significant challenges, especially when each application's utility is accessible only via costly, black-box evaluations. In this paper, we consider a centralized optimization framework in which a system controller suggests joint configurations to multiple strategic players, representing different applications, with the goal of aligning their incentives toward a stable outcome. This interaction is modeled as a learned optimization with an equilibrium constraint in which the central optimizer learns the utility functions through sequential, multi-fidelity evaluations with the goal of identifying a pure Nash equilibrium (PNE). To address this challenge, we propose MF-UCB-PNE, a novel multi-fidelity Bayesian optimization strategy that leverages a budget-constrained sampling process to approximate PNE solutions. MF-UCB-PNE systematically balances exploration across low-cost approximations with high-fidelity exploitation steps, enabling efficient convergence to incentive-compatible configurations. We provide theoretical and empirical insights into the trade-offs between query cost and equilibrium accuracy, demonstrating the effectiveness of MF-UCB-PNE in identifying effective equilibrium solutions under limited cost budgets.
\end{abstract}

\begin{IEEEkeywords}
Bayesian optimization, Nash equilibria, multi-fidelity simulation, Gaussian bandits
\end{IEEEkeywords}

\section{Introduction}\label{sec: intro}
\subsection{Context and Motivation}\label{ssec: context and motivation}
\IEEEPARstart{I}{n} open, softwarized, systems, different applications are run simultaneously on shared physical resources while pursuing separate, and potentially \emph{conflicting}, objectives. As illustrated in Fig. \ref{fig: oran example}, an important example is given by open telecom network architectures such as O-RAN \cite{polese2023understanding}, in which apps implementing network functionalities are designed to optimize distinct objectives by controlling physical and virtual parameters such as transmission powers and computational resources \cite{giannopoulos2025comix,zhang2023learning}. Examples include cloud computing environments involving tenants with diverse service level agreements \cite{mondal2023fairness}, and scheduling policy selection for service slices \cite{polese2022colo}.

\begin{figure}[t]

%\begin{minipage}[b]{1.0\linewidth}
  \centering
  \centerline{\includegraphics[scale=0.4]{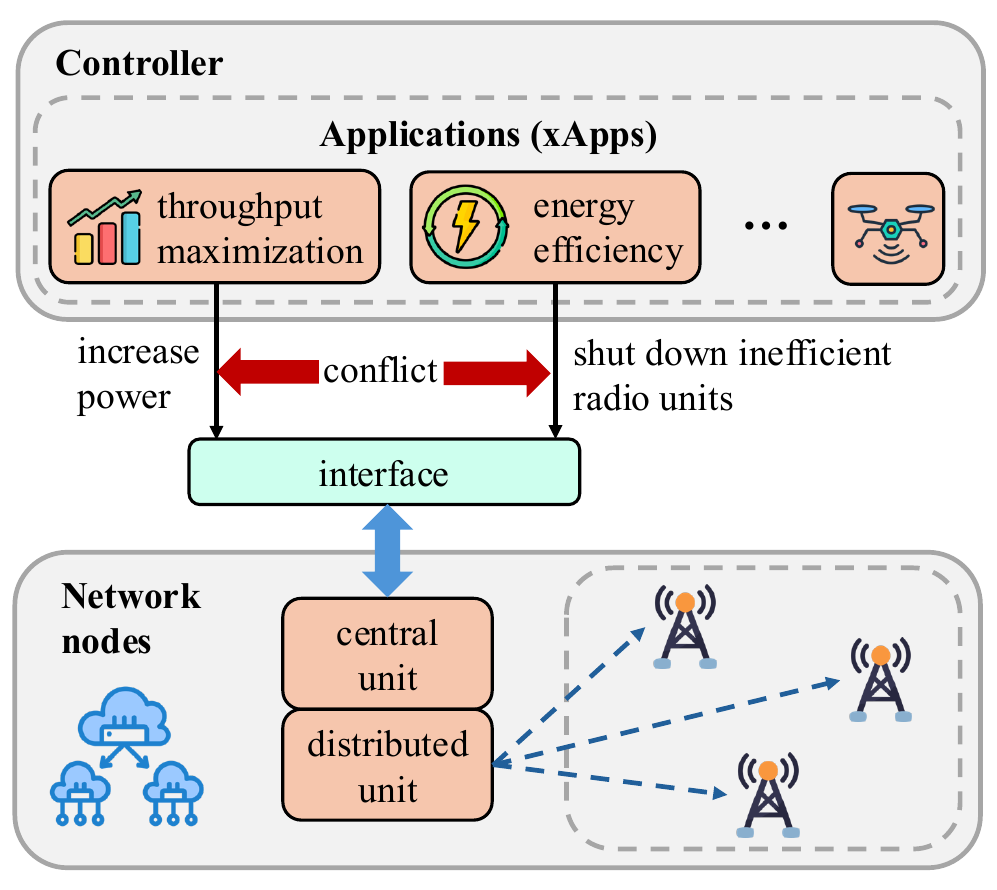}}
 \vspace{-0.2cm}
  \caption{An example of settings captured by the framework studied in this paper: In an open system such as an O-RAN architecture, applications, known as xApps, are deployed simultaneously at a controller. The xApps have different utility functions, and thus their different preferences may cause a conflict. For example, an xApp targeting energy efficiency may  attempt to shut down inefficient radio units, while other xApps may attempt to maximize throughput, causing a conflict on the use of the available transmission resources. The controller can seek joint configurations for the two apps that provide limited incentives for the two apps to deviate from it. }
  \label{fig: oran example}
\vspace{-0.3cm}
\end{figure}

The potentially conflicting nature of the objectives of different apps may yield instabilities and undesirable behaviors \cite{giannopoulos2025comix}. For example, in the setting of Fig. \ref{fig: oran example}, one app aims at controlling energy consumption by adaptively shutting down a transmitter, while another attempts to maximize the communication throughput by increasing energy consumption. This may yield reduced performance levels when both apps are deployed simultaneously. Other examples include scheduling apps for the medium access control (MAC) layer, with different apps focusing on the throughput optimization of distinct users \cite{chu2015prime}.

Motivated by the management of conflicts between independent apps in complex systems like telecom networks, this paper studies the setting depicted in Fig. \ref{fig: intro flow}. In it, a \emph{central optimizer} suggests actions to different \emph{players}, representing the apps in the motivating examples above. The goal for the optimizer is to ensure that the players have no, or limited, incentives to deviate from the suggested actions. Such set of actions is thus compatible with the individual incentives of the players. This ensures a stable operation of the system as the players adhere to the suggestions of the central optimizer.

This type of setting may be modeled as a \emph{learned optimization problem with equilibrium constraints}, in which the actions of competing players are, at least partially, controlled by a common optimizer \cite{pang2013joint,liang2015waveform,yu2017distributed}. In conventional optimization with equilibrium constraints, the utilities of the players are assumed to be \emph{known} to the centralized optimizer \cite{luo1996mathematical}. This is not the case where the optimizer needs to control black-box apps designed by third parties \cite{zhang2021finding}. For instance, in the O-RAN architectures, apps are ideally designed and provided by a variety of vendors, diversifying the supply chain for telecom networks \cite{aijaz2023open}. Therefore, in such settings, the utility functions must be modeled as \emph{black-box functions}.

In order to identify equilibrium solutions for black-box utility functions, the centralized optimizer may operate \emph{sequentially}, exploring alternatives and observing the utility values accrued by the different players. This exploration phase generally entails a \emph{cost} as utility evaluations may entail running real-world experiments or simulations based on, e.g., digital twins of the system \cite{chen2024neuromorphic,huang2023digital,kapteyn2021probabilistic}. To account for this, as also shown in Fig. \ref{fig: intro flow}, we allow the central optimizer to evaluate candidate solutions at different \emph{fidelity} levels, with higher fidelity levels requiring a higher cost. The problem of interest is that of designing centralized optimization strategies that balance the need to obtain good solutions with desirable equilibrium properties with the exploration cost. 

\subsection{Related Work}\label{ssec: related works}
This work lies at the intersection of several research domains, notably black-box optimization, under both single-fidelity \cite{frazier2018tutorial,wang2022tight,srinivas2012information,jones1998efficient,frazier2008knowledge} and multi-fidelity \cite{song2019general,zhang2025multi,kandasamy2017multi} data acquisition settings, and the evaluation of Nash equilibria, considering both known \cite{xu2024consistency,cao2025game,dreves2011solution,shanbhag2011complementarity,daskalakis2009complexity} and unknown \cite{fearnley2015learning,tay2023no,han2024no} utility functions. We then discuss each of these research areas in turn.

\emph{Single-objective black-box optimization}: For expensive-to-evaluate black-box optimization problems, \emph{Bayesian optimization} (BO) \cite{frazier2018tutorial,wang2022tight} provides a common sample-efficient framework. BO leverages probabilistic surrogates, typically Gaussian Processes (GPs), to model the underlying optimization function. It proceeds sequentially, by querying new candidate solutions via an uncertainty-aware acquisition function based on the surrogate function such as upper confidence bound (UCB) \cite{srinivas2012information}, expected improvement (EI) \cite{jones1998efficient}, and knowledge gradient (KG) \cite{frazier2008knowledge}.

\emph{Multi-fidelity single-objective black-box optimization}: Prior works has leveraged multi-fidelity information sources to reduce the query costs. Specifically, multi-fidelity BO (MFBO) builds on multi-fidelity surrogate models, enabling cost-aware acquisition functions that adaptively prioritize evaluations across fidelity levels. References \cite{song2019general,zhang2025multi} proposed multi-fidelity information-theoretic acquisition functions to guide the selection of iterates and provided formal theoretical guarantees on the regret performance; while the work \cite{kandasamy2017multi} extended UCB to the multi-fidelity optimization setting.

\emph{Centralized Nash equilibrium evaluation with known utility functions}: Computing Nash equilibria (NE) in strategic games with \emph{known} utility functions is a common task in many engineering applications \cite{xu2024consistency,cao2025game}. Typical approaches with theoretical guarantees often rely on mathematical programming, such as solving nonlinear equations derived from equilibrium conditions \cite{dreves2011solution} or leveraging complementarity problems \cite{shanbhag2011complementarity}. Alternative paradigms, such as fixed-point iteration algorithms or homotopy continuation methods, provide tractable solutions for polynomial utility structures but encounter instability in non-smooth or high-degree systems \cite{daskalakis2009complexity}.

\emph{Centralized Nash equilibrium evaluation with unknown utility functions}: When analytical expressions of the utility functions are \emph{unknown}, the centralized evaluation of a NE must rely on observations of the utilities for involved players. Reference \cite{fearnley2015learning} characterized the query complexity of finding a NE in various game classes; while references \cite{tay2023no,han2024no} introduced BO-based optimizers that can approximate NE with formal regret guarantees. However, prior works do not account for the possibility to leverage multi-fidelity observations.

\subsection{Main Contributions}\label{ssec: main contributions}
Motivated by modern softwarized systems such as open telecom networks (see Fig. \ref{fig: oran example}), this paper addresses the setting in Fig. \ref{fig: intro flow}, in which a centralized optimizer recommends a joint configuration for the actions of a number of players so that the players have limited incentives to deviate from the suggested actions. To this end, we assume that the centralized optimizer uses \emph{multi-fidelity} estimates of the expensive-to-evaluate black-box utility functions of the players with ordered query costs. 

In this context, we introduce MF-UCB-PNE, a novel MFBO policy that, under a limited query budget,  aims at approaching a solution that approximately satisfies \emph{pure Nash equilibrium} (PNE) conditions. MF-UCB-PNE operates in episodes that encompass \emph{exploration} -- to acquire information about the utility functions from low fidelity observations -- as well as \emph{exploitation} -- to evaluate the current solutions using maximum-fidelity observations.  The main contributions are as follows:
\begin{itemize}
    \item We introduce MF-UCB-PNE, a novel multi-fidelity policy targeting the evaluation of approximate PNEs in general-sum games with black-box utility functions. MF-UCB-PNE generalizes UCB-PNE \cite{tay2023no} to a multi-fidelity framework. MF-UCB-PNE adaptively switches between exploring the utility functions using low-fidelity estimates and exploiting the collected information to evaluate  the quality of the current solution. Adaptation is guided by an information gain per unit cost criterion \cite{song2019general}.  
    \item We provide the \emph{regret} analysis of the proposed MF-UCB-PNE policy with respect to the utility values attainable with optimal approximate PNE solutions. The theoretical results show that MF-UCB-PNE ensures asymptotic \emph{no-regret} performance given sufficiently large query budget.
    \item We present experimental results across synthetic games and wireless communications applications. The results reveal that MF-UCB-PNE produces higher-quality approximate PNE solutions within query budgets compared to single-fidelity policies.
    
\end{itemize}

\subsection{Organization}
The rest of the paper is organized as follows. Sec. \ref{sec: pf} formulates the multi-fidelity general-sum game with black-box utility functions. The proposed MF-UCB-PNE is introduced in Sec. \ref{sec: mfbo for pne}. A theoretical analysis of MF-UCB-PNE is provided in Sec. \ref{sec: theorems}. Numerical results  are provided in Sec. \ref{sec: experiment}. Finally, Sec. \ref{sec: conclusion} concludes the paper.

\begin{figure*}[t]

%\begin{minipage}[b]{1.0\linewidth}
  \centering
  \centerline{\includegraphics[scale=0.35]{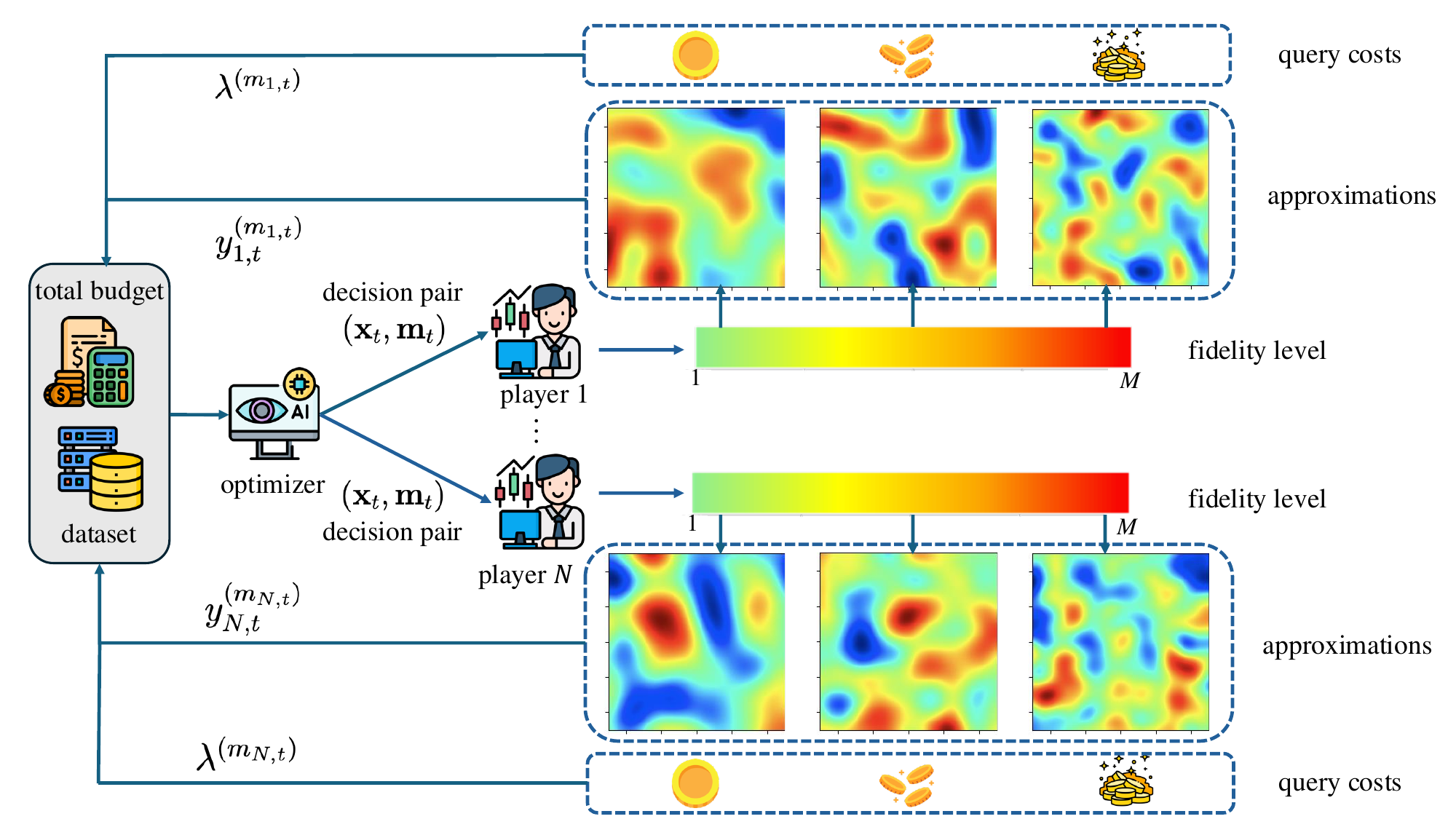}}
  \vspace{-0.2cm}
  \caption{This paper studies a setting in which a central optimizer aims at identifying an approximate pure Nash equilibrium (PNE) for a multi-player strategic game, while having access only to multi-fidelity estimators of expensive-to-evaluate black-box utility functions for the $N$ players. At any time $t$, the central optimizer selects an action profile $\mathbf{x}_{t}$ and a fidelity vector $\mathbf{m}_{t}=[m_{1,t},...,m_{N,t}]$. As a result, the optimizer receives noisy utility feedback $y_{n,t}^{(m_{n,t})}$ about the corresponding utility value $u^{(m_{n,t})}_n(\mathbf{x}_{t})$ at fidelity level $m_{n,t}$ incurring a query cost $\lambda^{(m_{n,t})}$, for all players $n\in\mathcal{N}$. The goal is to approach a solution in the $\epsilon^*$-PNE set \eqref{eq: epsilon pne}, where $\epsilon^*$ is the smallest achievable dissatisfaction level, while abiding by a total query budget $\Lambda$.}
  \label{fig: intro flow}
\vspace{-0.3cm}
\end{figure*}

\section{Problem Statement}\label{sec: pf}

\subsection{Setting}\label{ssec: sequential pne}
We consider the setting in Fig. \ref{fig: intro flow} in which a central agent is tasked with the selection of an action profile $\mathbf{x}=[\mathbf{x}_1,...,\mathbf{x}_N]$ for $N$ players, with $\mathbf{x}_n\in\mathcal{X}_n\subset\mathbbm{R}^{d_n}$ being the action for player $n\in\mathcal{N}=\{1,2,...,N\}$. The players are self-interested, and the payoff for player $n$ at action profile $\mathbf{x}\in\mathcal{X}=\prod_{n\in\mathcal{N}}\mathcal{X}_n$ is described by an unknown \emph{black-box utility function} $u_n(\mathbf{x}):\mathcal{X}\subset\mathbbm{R}^d\rightarrow\mathbbm{R}$. This can be expressed as
\begin{align}
    u_n(\mathbf{x})=u_n(\mathbf{x}_n,\mathbf{x}_{-n}),\label{eq: utility i}
\end{align}
where $\mathbf{x}_{-n}$ represents the actions performed by all players except for player $n$. 

As explained next, in order to ensure the \emph{stability} of the action profile $\mathbf{x}$, the central agent aims at selecting a \emph{pure Nash equilibrium} (PNE) of the strategic game defined by the utilities $\{u_n(\mathbf{x})\}_{n\in\mathcal{N}}$. This way, the operation of the central controller can be formulated as a form of optimization with equilibrium constraints \cite{pang2013joint,liang2015waveform,yu2017distributed}, where the equilibrium requirement pertains to the game played among the $N$ players based on the utilities in \eqref{eq: utility i}.

Under an action profile $\mathbf{x}$, each player $n\in \mathcal{N}$ has an incentive to deviate from its assigned action $\mathbf{x}_n$ if doing so can increase its utility. The strength of this incentive can be quantified by the difference
\begin{align}
    f_n(\mathbf{x})=\max\limits_{\mathbf{x}_n'\in\mathcal{X}_n}u_n(\mathbf{x}_n',\mathbf{x}_{-n})-u_n(\mathbf{x})\geq 0\label{eq: player i goal}
\end{align}
between the utility corresponding to action profile $\mathbf{x}$ and the maximum utility that player $n$ could obtain given the actions $\mathbf{x}_{-n}$ of the other players. We will refer to \eqref{eq: player i goal} as the \emph{dissatisfaction} of player $n$ at the action profile $\mathbf{x}$.

The set of PNEs for the strategic game at hand contains all action profiles $\mathbf{x}^*$ such that the dissatisfaction equals zero for each player $n\in \mathcal{N}$, i.e.,
\begin{align}
    \mathcal{X}^*:=\{\mathbf{x}^*\in\mathcal{X}|f_n(\mathbf{x}^*)=0\,\, \text{for}\,\, n\in \mathcal{N}\}.\label{eq: pure ne}
\end{align}
The PNE action profiles $\mathbf{x}^*$ in set \eqref{eq: pure ne} are incentive-compatible, such that each player has no reason to deviate from a recommended PNE action profile since following the recommendation is in their own best interest.  In practice, the PNE set may be empty, i.e., $\mathcal{X}^*=\emptyset$. Thus, it is useful to define the $\epsilon$-PNE set as
\begin{align}
    \mathcal{X}^{(\epsilon)}:=\{\mathbf{x}^{(\epsilon)}\in\mathcal{X}|f_n(\mathbf{x}^{(\epsilon)})\leq\epsilon\,\, \text{for}\,\, n\in \mathcal{N}\}\label{eq: epsilon pne}
\end{align}
for $\epsilon\geq 0$. The $\epsilon$-PNE set \eqref{eq: epsilon pne} includes solutions $\mathbf{x}^{(\epsilon)}$ from which unilateral deviations do not improve the local utility of any player by more than $\epsilon$. Thus, for an action profile $\mathbf{x}^{(\epsilon)}\in\mathcal{X}^{(\epsilon)}$, players have no incentive to deviate from the assigned action profile $\mathbf{x}^{(\epsilon)}\in\mathcal{X}^{(\epsilon)}$ as long as they consider a dissatisfaction level $\epsilon$ acceptable. The smallest achievable dissatisfaction level across all players is defined as
\begin{align}
    \epsilon^*=\inf\{\epsilon\in\mathbbm{R}|\mathcal{X}^{(\epsilon)}\neq \emptyset\}.\label{eq: epsilon definition}
\end{align}

The goal of this work is to design and analyze optimization methods that aims at approaching any $\epsilon^*$-PNE solution $\mathbf{x}^{*}\in\mathcal{X}^{(\epsilon^*)}$. As described next, this task is made challenging by the \emph{black-box} nature of the unknown utility functions. 

\subsection{Optimization}\label{ssec: optimization}
The centralized optimizer produces a series of action profiles $\mathbf{x}_t=[\mathbf{x}_{1,t},...,\mathbf{x}_{N,t}]$ at each time $t=1,...,T$ for all $N$ players. For each choice $\mathbf{x}_t$, the optimizer receives a noisy estimate of the utilities $\{u_n(\mathbf{x}_t)\}_{n=1}^N$. Specifically, we adopt a \emph{multi-fidelity} model, whereby the utilities can be measured at one of $M$ fidelity levels $\mathcal{M}=\{1,...,M\}$. Level $m=1$ corresponds to the \emph{lowest-fidelity} measurement, while $m=M$ to the \emph{highest-fidelity} level. Different fidelity levels may correspond to simulators with varying levels of accuracy for physical experiments, or to varying precision levels of measuring instruments. Accordingly, the central optimizer selects a fidelity level vector $\mathbf{m}_t=[m_{1,t},...,m_{N,t}]$ at each time $t$, along with the action profile $\mathbf{x}_t$, to form a decision pair $(\mathbf{x}_t,\mathbf{m}_t)$.

Define as $u_n^{(m)}(\mathbf{x})$ the approximation at fidelity level $m$ of the utility of player $n$. We have 
\begin{align}
    u_n^{(M)}(\mathbf{x})=u_n(\mathbf{x})\label{eq: true func}
\end{align}
for all $n\in\mathcal{N}$, so that the maximum-fidelity version coincides with the function itself. When the optimizer selects a decision pair $(\mathbf{x}_t,\mathbf{m}_{t})$ with fidelity $m_{n,t}$ for player $n$, it receives the noisy feedback
\begin{align}
    y_{n,t}^{(m_{n,t})}=u^{(m_{n,t})}_n(\mathbf{x}_{t})+z_{n,t},\label{eq: noisy utility}
\end{align}
where the observation noise variable $z_{n,t}\sim\mathcal{N}(0,\sigma^2)$ is independent across different players. As a result, at the end of time $t$, the optimizer updates the dataset of observations for each player $n\in\mathcal{N}$ as
\begin{align}
    \mathcal{D}_{n,t}=\Big\{(\mathbf{x}_{1},m_{n,1},y_{n,1}^{(m_{n,1})}),...,(\mathbf{x}_{t},m_{n,t},y_{n,t}^{(m_{n,t})})\Big\}.\label{eq: local dataset}
\end{align}

\begin{figure}[t]

%\begin{minipage}[b]{1.0\linewidth}
  \centering
  \centerline{\includegraphics[scale=0.25]{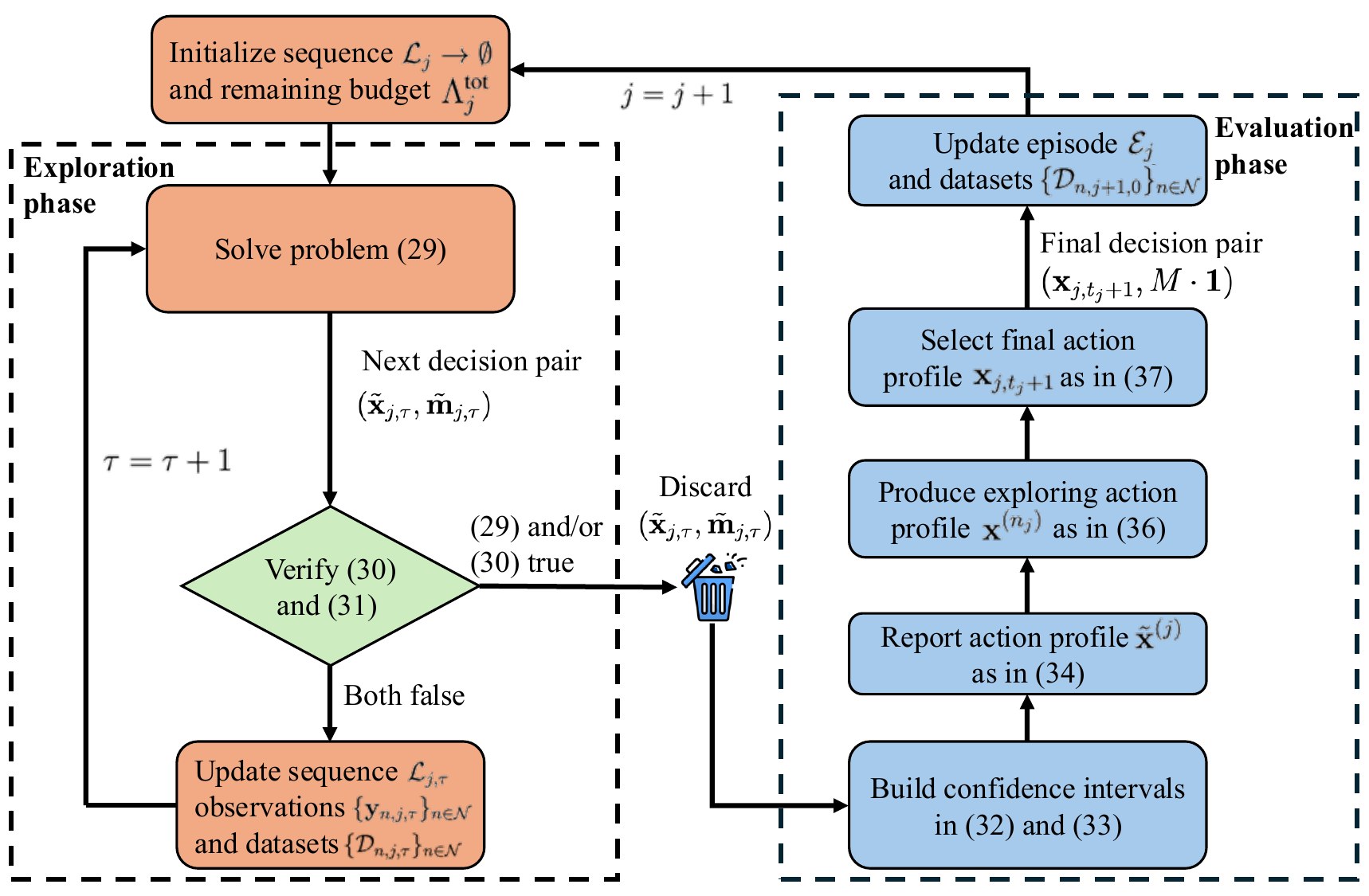}}
  \vspace{-0.2cm}
  \caption{Block diagram illustrating the operation of the proposed MF-UCB-PNE policy MF-UCB-PNE operates via a sequence of  episodes $\mathcal{E}_j$ consisting of an exploration phase and of an evaluation phase.}
  \label{fig: block diagram}
\vspace{-0.3cm}
\end{figure}

The \emph{query cost} of evaluating the utility function $u_n^{(m)}(\mathbf{x})$ at fidelity level $m$ is defined as $\lambda^{(m)}>0$ for all players $n\in\mathcal{N}$. The query costs are ordered from lowest fidelity to highest fidelity in an increasing order, i.e.,
\begin{align}
    \lambda^{(1)}\leq\lambda^{(2)}\leq ...\leq\lambda^{(M)}=1, \label{eq: cost list}
\end{align}
where we have set the maximum cost $\lambda^{(M)}=1$ without loss of generality. The total query cost is constrained by a pre-determined total query \emph{budget} $\Lambda$ as
\begin{align}
    \sum_{n=1}^N\sum_{t=1}^T\lambda^{(m_{n,t})}\leq\Lambda.\label{eq: budget}
\end{align}

Given the pre-determined query budget $\Lambda$, our goal is to design a selection strategy $\pi$ mapping the collected observations $\mathcal{D}_{t-1}=\cup_{n\in\mathcal{N}}\mathcal{D}_{n,t-1}$ to the decision pair $(\mathbf{x}_t,\mathbf{m}_t)$ at all time steps $t=1,...,T$, such that the optimizer approaches an $\epsilon^*$-PNE solution in the set \eqref{eq: epsilon pne}. 

To this end, as in the \emph{single-objective} framework studied in \cite{song2019general}, we define a proxy optimization objective in terms of a reward function that accounts for the dissatisfaction level associated with the solutions evaluated by the optimizer at the maximum fidelity, as well as for the cost of exploration at lower fidelity levels. To proceed, fix a constant $C$ to be any upper bound on the maximum value of the dissatisfaction: 
\begin{align}
        C\geq \max\limits_{n\in\mathcal{N},\mathbf{x}\in\mathcal{X}}f_n(\mathbf{x}).\label{eq: constant C}
    \end{align}
The reward is then defined by distinguishing between \emph{exploration} and \emph{evaluation} rounds: 
\begin{itemize}
    \item \emph{Evaluation round}: In an evaluation round $t$, the fidelity levels $\mathbf{m}_t$ are set at the maximum level $M$ for all players, i.e., $\mathbf{m}_t=M\cdot\mathbf{1}$, where $\mathbf{1}$ is the all-one vector. This allows the optimizer to estimate the actual utilities of all players. At any evaluation round $t$, the accrued reward is the normalized dissatisfaction improvement 
    \begin{align}
        r(\mathbf{x}_t)=\Big[C-\max_{n\in\mathcal{N}}f_n(\mathbf{x}_t)\Big]/C\label{eq: normalized reward}
    \end{align}associated with the selected action profile $\mathbf{x}_t$. By \eqref{eq: constant C}, the reward \eqref{eq: normalized reward} is non-negative. In particular, if \eqref{eq: normalized reward} holds with equality, it ranges from $0$ (maximum dissatisfaction) to $1-\epsilon^*/C$ ($\epsilon^*$-PNE).
    \item \emph{Exploration round}: In an exploration round $t$, there exists at least one fidelity level $m_n$ smaller than $M$ for some player $n\in\mathcal{N}$. In this case, following \cite{song2019general}, we set the reward to $0$. This way, observations at lower fidelity levels are leveraged purely for exploration, with the goal of enhancing future rewards \eqref{eq: normalized reward} to be obtained in subsequent evaluation rounds.

\end{itemize}

Overall, given a selection $(\mathbf{x},\mathbf{m})$, the \emph{reward} accrued by the optimizer is defined as
\begin{align}
    r(\mathbf{x},\mathbf{m})=\begin{cases}\frac{C-\max_{n\in\mathcal{N}}f_n(\mathbf{x})}{C} & \text{if}\,\, \mathbf{m}=M\cdot\mathbf{1}\quad\text{(evaluation)} \\ 0 & \text{otherwise}\quad\text{(exploration).}\label{eq: additive mf}\end{cases}
\end{align}
With this proxy objective, the design of the policy $\pi$ targets the optimization problem 
\begin{subequations}\label{eq: loss goal}
\begin{equation}
    \max\limits_{\pi,T}\sum_{t=1}^Tr(\pi(\mathcal{D}_{t-1}))\label{eq: min cumulative loss}
\end{equation}
\begin{equation}
    \text{s.t.}\quad \sum_{n=1}^N\sum_{t=1}^T\lambda^{(m_{n,t})}\leq\Lambda\label{eq: budget constraint}
\end{equation}
\end{subequations}under the budget cost \eqref{eq: budget}, where the policy $\pi$ return the actions
\begin{align}
    (\mathbf{x}_t,\mathbf{m}_t)=\pi(\mathcal{D}_{t-1})
\end{align}
at each time $t=1,...,T.$

An idealized central optimizer would know at least one action profile $\mathbf{x}^*$ in the target $\epsilon^*$-PNE set $\mathcal{X}^{(\epsilon^*)}$. Such an optimizer could hence avoid exploration steps, and choose to evaluate directly the action profile $\mathbf{x}^*\in\mathcal{X}^{(\epsilon^*)}$ for all time steps allowed by the budget, obtaining a reward value \eqref{eq: min cumulative loss} equal to $\Lambda(1-\epsilon^*/C)/N$. Note, in fact, that the maximum number of evaluation steps is $\Lambda/N$, since the cost of each evaluation step equals $N$. Therefore, the quality of a policy $\pi$ designed to target problem \eqref{eq: loss goal}  can be measured by the \emph{regret}
\begin{align}
    R(\Lambda)=\frac{\Lambda}{N}\Big(1-\frac{\epsilon^*}{C}\Big)-\sum_{t=1}^Tr(\pi(\mathcal{D}_{t-1})).\label{eq: regret goal}
\end{align}

\section{Multi-Fidelity BO for Strategic Games}\label{sec: mfbo for pne}
In this paper, we propose a multi-fidelity policy $\pi$, referred to as MF-UCB-PNE, that builds on a multiple-output Gaussian process (MOGP) surrogate model for the utility functions $\{u_n(\cdot)\}_{n\in\mathcal{N}}$ \cite{liu2018remarks}. MF-UCB-PNE operates in episodes, with each episode consisting of an exploration phase and a final evaluation time step \cite{song2019general}. This section introduces MF-UCB-PNE by first presenting the adopted surrogate MOGP model, and then detailing the exploration and evaluation phases. We start with an overview of the MF-UCB-PNE policy.

\subsection{Overview of MF-UCB-PNE}\label{ssec: mf-ucb-pne}
As shown in Fig. \ref{fig: block diagram}, MF-UCB-PNE balances the exploration of low fidelity observations to collect information about the utility functions with the evaluation of utilities at the maximum fidelity level to identify well-performing action profiles approaching the $\epsilon^*$-PNE set \eqref{eq: epsilon pne}. To this end, the operation of the MF-UCB-PNE policy is organized in episodes encompassing an exploration phase and a single evaluation step.

\begin{definition}[Episode]\label{definition: episode}
    An episode $\mathcal{E}_j$ consists of a sequence of decision pairs $\mathcal{E}_j=\{(\mathbf{x}_{j,1},\mathbf{m}_{j,1}),\\...,(\mathbf{x}_{j,t},\mathbf{m}_{j,t}),(\mathbf{x}_{j,t+1},\mathbf{m}_{j,t+1})\}$ divided into exploration and evaluation phases:
    \begin{itemize}
        \item \textit{Exploration phase}: The first $t$ decisions 
        \begin{align}
            \mathcal{L}_j=\{(\mathbf{x}_{j,\tau},\mathbf{m}_{j,\tau})\}_{\tau=1}^t\label{eq: L j}
        \end{align}
        evaluate at least one utility at fidelity level lower than the maximum level $M$, i.e., $\mathbf{m}_{j,\tau}\neq M\cdot\mathbf{1}$, forming the exploration phase. 
        \item Evaluation phase: The final decision pair $(\mathbf{x}_{j,t+1},\mathbf{m}_{j,t+1})$ corresponds to the evaluation phase, as it evaluates all the utility functions at the maximum fidelity level, i.e., $\mathbf{m}_{j,t+1}=M\cdot\mathbf{1}$.
    \end{itemize}
    
\end{definition}

Following the Definition \ref{definition: episode}, MF-UCB-PNE carries out a sequence of episodes $\{\mathcal{E}_j\}$, with each episode $\mathcal{E}_j$ consisting of a sequence of decision pairs $\mathcal{L}_{j}=\{(\mathbf{x}_{j,1},\mathbf{m}_{j,1}),...,(\mathbf{x}_{j,t_j},\mathbf{m}_{j,t_j})\}$ exploring lower fidelity utilities and of a final decision pair $(\mathbf{x}_{j,t_j+1},M\cdot\mathbf{1})$ selecting maximum fidelity for all players. Accordingly, the accumulated query cost in episode $\mathcal{E}_j$ is obtained as
\begin{align}
    \Lambda_j=\Lambda_{\mathcal{L}_{j}}+N,
\end{align}with $\Lambda_{\mathcal{L}_{j}}$ being the cumulative query cost incurred on the sequence $\mathcal{L}_{j}$ for the exploration phase
\begin{align}
    \Lambda_{\mathcal{L}_{j}}=\sum_{n=1}^N\sum_{\tau=1}^{t_j}\lambda^{(m_{n,j,\tau})},\label{eq: L j budget}
\end{align}
and $N$ being the cost of the evaluation phase.

\begin{algorithm}[t!]
\caption{MF-UCB-PNE}\label{table: mf-ucb-pne}
\SetKwInOut{Input}{Input}
\Input{Query costs $\{\lambda^{(m)}\}_{m=1}^M$, query budget $\Lambda$}
\SetKwInOut{Output}{Output}
\Output{Optimized solution $\mathbf{x}^*$}\
Initialize episode $j=1$, observation dataset $\mathcal{D}_{n,j,0}=\emptyset$, and remaining budget $\Lambda^{\text{tot}}_j=\Lambda$\\
\While{\emph{$\Lambda^{\text{tot}}_j\geq N$}}{
Run MF-UCB-PNE-Episode$(\Lambda^{\text{tot}}_j,\{\mathcal{D}_{n,j,0}\}_{n\in\mathcal{N}})$ (Algorithm \ref{table: lf-mi-max}) to obtain the sequence of decision pairs $\mathcal{L}_{j}$, observations $\{\mathbf{y}_{n,j,t_j}\}_{n\in\mathcal{N}}$, accumulated cost $\Lambda_{\mathcal{L}_j}$ and updated dataset $\{\mathcal{D}_{n,j,t_j}\}_{n\in\mathcal{N}}$\\
Obtain the utility confidence interval $\{\mathcal{U}_{n,j}\}_{n\in\mathcal{N}}$ as in \eqref{eq: u j confidence interval}\\
Obtain the dissatisfaction confidence interval $\{\mathcal{C}_{n,j}\}_{n\in\mathcal{N}}$ as in \eqref{eq: j Delta ci} \\
Report the action profile $\tilde{\mathbf{x}}^{(j)}$ via \eqref{eq: j reported x}\\
Select the exploring action profile $\mathbf{x}^{(n_j)}$ via \eqref{eq: exploring strategy profile}\\
Produce the final decision pair $(\mathbf{x}_{j,t_j+1},M\cdot\mathbf{1})$ for current episode $\mathcal{E}_j$ with action profile $\mathbf{x}_{j,t_j+1}$ selected via \eqref{eq: x t+1}\\
Query $(\mathbf{x}_{j,t_j+1},M\cdot\mathbf{1})$ and update datasets as $\mathcal{D}_{n,j+1,0}=\mathcal{D}_{n,j,t_j}\cup(\mathbf{x}_{j,t_j+1},M,u_n(\mathbf{x}_{j,t_j+1})+z_{n,j,t_j+1})$ for all $n\in\mathcal{N}$\\
Update the MOGP posteriors using $\mathcal{D}_{n,j,t_j+1}$ as in \eqref{eq: mogp mean} and \eqref{eq: mogp variance} for all $n\in\mathcal{N}$\\
Calculate the remaining budget $\Lambda^{\text{tot}}_{j+1}=\Lambda^{\text{tot}}_j-\Lambda_{\mathcal{L}_{j}}-N$\\
Set episode index $j=j+1$
}
Return $\mathbf{x}^{*}=\mathbf{x}_{T_j}$\\
\end{algorithm}

\subsection{Multiple-Output Gaussian Process}\label{ssec: mogp}
For both exploration and evaluation phase, MF-UCB-PNE builds on probabilistic surrogate models for the multi-fidelity utility functions $\{\mathbf{u}_n(\mathbf{x})=[u_n^{(1)}(\mathbf{x}),...,u_n^{(M)}(\mathbf{x})]^{\sf T}\}$ for all players $n\in\mathcal{N}$. Under this model, the vectors $\mathbf{u}_n(\mathbf{x})$ are assumed to be independent across the player index $n$; while the approximations $\{u_n^{(m)}(\mathbf{x})\}_{m\in\mathcal{M}}$ are allowed to be generally correlated across the fidelity index $m$ for each player $n$, following a zero-mean MOGP with kernel function $k((\mathbf{x},m),(\mathbf{x},m'))$ \cite{kennedy2000predicting}. A common MOGP follows the auto-regressive model \cite{kennedy2000predicting,raissi2016deep},
\begin{align}
    u_n^{(m)}(\mathbf{x})= \rho^{(m)} u_n^{(m+1)}(\mathbf{x})+\sqrt{1-(\rho^{(m)})^2}q^{(m)}_n(\mathbf{x}),\label{eq: ar model}
\end{align}
where $0<\rho^{(m)}<1$ is a correlation coefficient, and we have $u_n^{(M)}(\mathbf{x})=u_n(\mathbf{x})\sim\mathcal{GP}(0,\kappa(\mathbf{x},\mathbf{x}'))$, and $q^{(m)}_n(\mathbf{x})\sim\mathcal{GP}(0,\kappa^{(m)}(\mathbf{x},\mathbf{x}'))$, where the kernel functions are often instantiated as the radial basis function (RBF) kernels
\begin{align}
    &\kappa(\mathbf{x},\mathbf{x}')=\exp(-h||\mathbf{x}-\mathbf{x}'||^2),\label{eq: rbf kernels}\\
    &\kappa^{(m)}(\mathbf{x},\mathbf{x}')=\exp(-\zeta^{(m)}||\mathbf{x}-\mathbf{x}'||^2)\label{eq: fidelity kernel}
\end{align}
with lengthscales $h>0$ and $\zeta^{(m)}>0$, respectively. 

Let $\mathbf{y}_{n,t}=[y_{n,1}^{(m_{n,1})},...,y_{n,t}^{(m_{n,t})}]^{\sf T}$ denote the $t\times 1$ observation vector obtained based on the selected actions profiles $\mathbf{X}_t=[\mathbf{x}_1,...,\mathbf{x}_t]$ and fidelity levels $\mathbf{M}_{n,t}=[m_{n,1},...,m_{n,t}]$ for each player $n$ up to time $t$. Using the $t\times t$ covariance matrix $\mathbf{K}(\mathbf{X}_t,\mathbf{M}_{n,t})$ with the $(i,j)$-th element $k((\mathbf{x}_i,m_{n,i}),(\mathbf{x}_j,m_{n,j}))$, the MOGP posterior distribution of the approximation $u_n^{(m)}(\mathbf{x})$ at any action profile $\mathbf{x}$ and fidelity level $m$ given the collected observations \eqref{eq: local dataset} is Gaussian with mean
\begin{equation}
    \mu^{(m)}_{n,t}(\mathbf{x})=\mathbf{k}(\mathbf{x},m)^{\sf T}(\tilde{\mathbf{K}}(\mathbf{X}_{t},\mathbf{M}_{n,t}))^{-1}\mathbf{y}_{n,t},\label{eq: mogp mean}
\end{equation}
and variance
\begin{align}
    [\sigma_{n,t}^{(m)}(\mathbf{x})]^2=&k((\mathbf{x},m),(\mathbf{x},m))-\nonumber\\&\mathbf{k}(\mathbf{x},m)^{\sf T}(\tilde{\mathbf{K}}(\mathbf{X}_{t},\mathbf{M}_{n,t}))^{-1}\mathbf{k}(\mathbf{x},m),\label{eq: mogp variance}
\end{align}
where the $t\times 1$ cross-variance vector $\mathbf{k}(\mathbf{x},m)=[k((\mathbf{x},m),(\mathbf{x}_{1},m_{n,1})),...,k((\mathbf{x},m),(\mathbf{x}_{t},m_{n,t}))]^{\sf T}$ and we defined $\tilde{\mathbf{K}}(\mathbf{X}_{t},\mathbf{M}_{n,t})=\mathbf{K}(\mathbf{X}_{t},\mathbf{M}_{n,t})+\sigma^2\mathbf{I}$. 

Overall, the MOGP mean $\mu^{(m)}_{n,t}(\mathbf{x})$ in \eqref{eq: mogp mean} provides an estimated value for the approximation $u_n^{(m)}(\mathbf{x})$ based on the collected dataset \eqref{eq: local dataset}, while the variance $[\sigma_{n,t}^{(m)}(\mathbf{x})]^2$ quantifies the corresponding uncertainty of the estimate.

\subsection{Exploration Phase}\label{ssec: exploration phase}
With this background in place, we can now define the two phases implemented by MF-UCB-PNE in each episode $\mathcal{E}_j$ (see Fig. \ref{fig: block diagram}). In the exploration phase, we adopt an \emph{information-theoretic} criterion for selecting the sequence of actions. To elaborate, denote as $\mathcal{D}_{n,j,\tau-1}$ the data collected up to time $\tau-1$ in episode $j$. By defined next, at the beginning of time $\tau\geq 1$ in episode $\mathcal{E}_j$, the optimizer's policy $\pi$ selects the next pair $(\mathbf{x}_{j,\tau},\mathbf{m}_{j,\tau})$ to maximize the ratio of the overall information obtained about the true value of the utility functions $\{u_n(\mathbf{x})\}_{n\in\mathcal{N}}$ and the cost of the selection $(\mathbf{x}_{j,\tau},\mathbf{m}_{j,\tau})$.

The information about the utility function is quantified by the \emph{mutual information} between the true utility value $u_n(\mathbf{x}_{j,\tau})$ and the corresponding observation $y_n^{(m_{n,j,\tau})}$ given the available dataset $\mathcal{D}_{n,j,\tau-1}$. This is defined as \cite{srinivas2012information}
\begin{align}
    &\mathrm{I}(y_n^{(m_{n,j,\tau})};u_n(\mathbf{x}_{j,\tau})|\mathbf{x}_{j,\tau},m_{n,j,\tau},\mathcal{D}_{n,j,\tau-1})\nonumber\\&=\frac{1}{2}\ln\Bigg(1+\frac{[\sigma_{n,j,\tau-1}^{(m_{n,j,\tau})}(\mathbf{x}_{j,\tau})]^2}{\sigma^2}\Bigg),\label{eq: mutual information}
\end{align}
where the GP posterior variance $[\sigma_{n,j,\tau-1}^{(m_{n,j,\tau})}(\mathbf{x}_{j,\tau})]^2$ is given in \eqref{eq: mogp variance}. The mutual information \eqref{eq: mutual information} measures the uncertainty reduction about utility $u_n(\mathbf{x}_{j,\tau})$ that can be obtained by choosing the pair $(\mathbf{x}_{j,\tau},\mathbf{m}_{j,\tau})$ at time step $\tau$ in episode $j$.

Define as 
\begin{align}
    \Lambda^{\text{tot}}_j=\Lambda-\sum_{j'=1}^{j-1}\Lambda_{j'}\label{eq: remaining budget}
\end{align}
the remaining query budget before starting episode $\mathcal{E}_j$, and as
\begin{align}
    \Lambda^{\text{tot}}_{j,\tau}=\Lambda^{\text{tot}}_j-\sum_{n=1}^N\sum_{\tau'=1}^{\tau-1}\lambda^{(m_{n,j,\tau'})}\label{eq: remaining budget at tau}
\end{align}
the remaining budget at step $\tau$ of episode $\mathcal{E}_j$. The following events may occur.
\begin{enumerate}
    \item \emph{Query budget insufficient for exploration}: If the budget is insufficient to accommodate another exploration step followed by the final evaluation step, i.e., if
   \begin{align}
        \Lambda^{\text{tot}}_j<N(\lambda^{(1)}+1),\label{eq: insufficient budget}
    \end{align}the exploration phase is concluded, setting $t_j=\tau-1$.
    \item \emph{Query budget sufficient for exploration}: If \eqref{eq: insufficient budget} does not hold, MF-UCB-PNE obtains a \emph{candidate solution} by addressing the problem of maximizing the ratio between the sum of the mutual information \eqref{eq: mutual information} accrued on all players and the corresponding total query costs, i.e.,
    \begin{subequations}\label{eq: constrained opt}
    \begin{align}
        &(\tilde{\mathbf{x}}_{j,\tau},\tilde{\mathbf{m}}_{j,\tau})\nonumber\\&=\argmax\limits_{\mathbf{x}\in\mathcal{X},\mathbf{m}\in\mathcal{M}}\frac{\sum_{n=1}^N \mathrm{I}(y_n^{(m_n)};u_n(\mathbf{x})|\mathbf{x},m_n,\mathcal{D}_{n,j,\tau-1})}{\sum_{n=1}^N\lambda^{(m_n)}}\label{eq: select exploring decision}
    \end{align}
    \begin{equation}
        \text{s.t.}\quad \sum_{n=1}^N\lambda^{(m_n)}\leq\Lambda^{\text{tot}}_{j,\tau}-N.\label{eq: exploring condition 1}
    \end{equation}
    \end{subequations}
    The constraint \eqref{eq: exploring condition 1} ensures that the remaining query budget after the current query is sufficient for at least an evaluation step at the maximum fidelity level, which has cost $N$. Having solved problem \eqref{eq: constrained opt}, the following options may be realized.
    \begin{enumerate}
        \item \emph{Large fraction of maximum-fidelity evaluations}: If a sufficiently large fraction of fidelity levels in vector $\mathbf{m}_{j,\tau}$ are equal to $M$, i.e., if the inequality
        \begin{align}
            \frac{\sum_{n=1}^N\mathbbm{1}(m_{n,j,\tau}=M)}{N}\geq\eta\label{eq: exploring condition 2}
        \end{align}
        holds for some threshold $\eta\in[1/N,1]$, then MF-UCB-PNE sets $t_j=\tau-1$ and enters the evaluation phase.
        \item \emph{Small accumulated information gain-cost ratio}: If the information gain-cost ratio accumulated on selected sequence $\mathcal{L}_{j,\tau}=\{(\mathbf{x}_{j,1},\mathbf{m}_{j,1}),...,\\(\mathbf{x}_{j,\tau},\mathbf{m}_{j,\tau})\}$ in  episode $j$ satisfies the inequality
        \begin{align}
        &\frac{\sum_{n=1}^N\mathrm{I}(\mathbf{y}_{n,j,\tau};\{u_n(\mathbf{x})\}_{\mathbf{x}\in\mathcal{L}_{j,\tau}}|\mathcal{L}_{j,\tau},\mathcal{D}_{n,j,0})}{\sum_{n=1}^N\sum_{\tau'=1}^{\tau}\lambda^{(m_{n,j,\tau'})}}\nonumber\\&<1/\sqrt{\Lambda^{\text{tot}}_j},\label{eq: exploring condition 3}
    \end{align}
    then MF-UCB-PNE sets $t_j=\tau-1$ and enters the evaluation phase.
    \item \emph{All other cases}: If neither of the conditions \eqref{eq: exploring condition 2} or \eqref{eq: exploring condition 3} is satisfied, the algorithm sets $(\tilde{\mathbf{x}}_{j,\tau},\tilde{\mathbf{m}}_{j,\tau})$ via \eqref{eq: select exploring decision} and continues the exploration phase.
    \end{enumerate}
\end{enumerate}

\subsection{Evaluation Phase}\label{ssec: evaluation phase}
In the final evaluation step, MF-UCB-PNE adopts a strategy that follows UCB-PNE \cite{tay2023no} to select the candidate action profile $\mathbf{x}_{j,t_j+1}$. Accordingly, MF-UCB-PNE starts by computing a confidence interval for true utility function $u_n(\mathbf{x})$ of each player $n$ as
\begin{align}
    \mathcal{U}_{n,j}(\mathbf{x})&=[\check{u}_{n,j}(\mathbf{x}), \hat{u}_{n,j}(\mathbf{x})]\nonumber\\&=[\mu^{(M)}_{n,j,t_j}(\mathbf{x})-\beta_{n,j}\sigma^{(M)}_{n,j,t_j}(\mathbf{x}), \mu^{(M)}_{n,j,t_j}(\mathbf{x})\nonumber\\&\hspace{0.5cm}+\beta_{n,j}\sigma^{(M)}_{n,j,t_j}(\mathbf{x})],\label{eq: u j confidence interval}
\end{align}
where $\beta_{n,j}>0$ is a scaling parameter. We will elaborate on the selection of parameter $\beta_{n,j}$ in the next section. Using the confidence interval $\mathcal{U}_{n,j}(\mathbf{x})$ in \eqref{eq: u j confidence interval} for the utility function, the optimizer also calculates confidence interval for the dissatisfaction $f_n(\mathbf{x})$ in \eqref{eq: player i goal} at each player $n$ as
\begin{align}
    \mathcal{C}_{n,j}(\mathbf{x})&=[\check{f}_{n,j}(\mathbf{x}),\hat{f}_{n,j}(\mathbf{x})]\nonumber\\&=\Big[\max\limits_{\mathbf{x}_n'\in\mathcal{X}}\check{u}_{n,j}(\mathbf{x}_n',\mathbf{x}_{-n})-\hat{u}_{n,j}(\mathbf{x}), \nonumber\\&\hspace{0.6cm}\max\limits_{\mathbf{x}_n'\in\mathcal{X}}\hat{u}_{n,j}(\mathbf{x}_n',\mathbf{x}_{-n})-\check{u}_{n,j}(\mathbf{x})\Big].\label{eq: j Delta ci}
\end{align}

The optimizer then reports an action profile $\tilde{\mathbf{x}}^{(j)}$ for possible execution at the end of episode $j$ as
\begin{align}
    \tilde{\mathbf{x}}^{(j)}=\arg\min\limits_{\mathbf{x}\in\mathcal{X}}\max\limits_{n\in\mathcal{N}}\check{f}_{n,j}(\mathbf{x}).\label{eq: j reported x}
\end{align}
The reported action profile \eqref{eq: j reported x} minimizes the optimistic estimate $\check{f}_{n,j}(\mathbf{x})$ of the dissatisfaction function \eqref{eq: player i goal} for the player $n\in\mathcal{N}$ that has the strongest incentive to deviate. 

\begin{algorithm}[t!]
\caption{MF-UCB-PNE-Episode}\label{table: lf-mi-max}
\SetKwInOut{Input}{Input}
\Input{Remaining budget $\Lambda^{\text{tot}}_j$, datasets $\{\mathcal{D}_{n,j,0}\}_{n\in\mathcal{N}}$, threshold $\eta$}
\SetKwInOut{Output}{Output}
\Output{Sequence $\mathcal{L}_j$, observations $\{\mathbf{y}_{n,j,t_j}\}_{n\in\mathcal{N}}$, accumulated cost $\Lambda_{\mathcal{L}_j}$ and updated dataset $\{\mathcal{D}_{n,j,t_j}\}_{n\in\mathcal{N}}$}\
Initialize episode time $\tau=1$, sequence $\mathcal{L}_{j,0}=\emptyset$, and accumulated cost $\Lambda_{\mathcal{L}_j}=0$\\
\While{\emph{Not End}}{
Select decision pair $(\tilde{\mathbf{x}}_{j,\tau},\tilde{\mathbf{m}}_{j,\tau})$ by solving \eqref{eq: constrained opt}\\
      \If{inequality \eqref{eq: exploring condition 2} or \eqref{eq: exploring condition 3} hold true}
      {
        break 
      }
      
      \Else{
      
Update datasets $\mathcal{D}_{n,j,\tau}$ with observation $y_{n}^{(m_{n,j,\tau})}$ for all $n\in\mathcal{N}$\\
Update sequence $\mathcal{L}_{j,\tau}=\mathcal{L}_{j,\tau-1}\cup(\tilde{\mathbf{x}}_{j,\tau},\tilde{\mathbf{m}}_{j,\tau})$\\
Update the MOGP posteriors using $\mathcal{D}_{n,j,\tau}$ as in \eqref{eq: mogp mean} and \eqref{eq: mogp variance} for all $n\in\mathcal{N}$\\
Calculate the accumulated cost $\Lambda_{\mathcal{L}_j}=\Lambda_{\mathcal{L}_j}+\sum_{n=1}^N\lambda^{(m_{n,j,\tau})}$ \\
Set episode time $\tau=\tau+1$
}
Obtain $\mathcal{L}_j=\mathcal{L}_{j,\tau}$ and $\{\mathcal{D}_{n,j,t_j}=\mathcal{D}_{n,j,\tau}\}_{n\in\mathcal{N}}$} 
Return $\mathcal{L}_j$, $\{\mathcal{D}_{n,j,t_j}\}_{n\in\mathcal{N}}$, $\Lambda_{\mathcal{L}_j}$ and $\{\mathbf{y}_{n,j,t_j}\}_{n\in\mathcal{N}}$\\
\end{algorithm}

To allow for exploration at the maximum fidelity level $M$, the optimizer also searches for the player $n_{j}$ with the maximum dissatisfaction upper confidence bound under the reported action profile $\tilde{\mathbf{x}}^{(j)}$ in \eqref{eq: j reported x}, i.e.,
\begin{align}
    n_{j}=\arg\max\limits_{n\in\mathcal{N}}\hat{f}_{n,j}(\tilde{\mathbf{x}}^{(j)}),\label{eq: worst player}
\end{align}
and then updates the reported action profile $\tilde{\mathbf{x}}^{(j)}$ to the \emph{exploring action profile}
\begin{align}
    \mathbf{x}^{(n_j)}=\Big(\tilde{\mathbf{x}}^{(j)}_{-n_j}, \arg\max\limits_{\mathbf{x}'_{n_j}\in\mathcal{X}_{n_j}}\hat{u}_{n_j,j}(\mathbf{x}'_{n_{j}},\tilde{\mathbf{x}}^{(j)}_{-n_j})\Big).\label{eq: exploring strategy profile}
\end{align}
The exploring action profile $\mathbf{x}^{(n_j)}$ attempts to improve the utility gain for the player $n_j$ that the optimizer believes to have the strongest incentive to deviate from $\tilde{\mathbf{x}}^{(j)}$.

The final decision for the next action profile at the end of episode $\mathcal{E}_j$ is selected between the reported action profile $\tilde{\mathbf{x}}^{(j)}$ in \eqref{eq: j reported x} and the exploring action profile $\mathbf{x}^{(n_j)}$ in \eqref{eq: exploring strategy profile} that brings higher GP posterior variance, i.e.,
\begin{align}
    \mathbf{x}_{j,t_j+1}=\arg\max\limits_{\mathbf{x}\in\{\tilde{\mathbf{x}}^{(j)},\mathbf{x}^{(n_j)}\}}\bigg\{\max_{n\in\mathcal{N}}\Big[\sigma^{(M)}_{n,j,t_j}(\mathbf{x})\Big]^2\bigg\}.\label{eq: x t+1}
\end{align}
The rationale of the choice \eqref{eq: x t+1} refers to a double application of the principle of \emph{optimism in the face of uncertainty} \cite{lai1985asymptotically}.

Like BO, MF-UCB-PNE is a meta-algorithm that can be instantiated using different implementations for the update of the GP surrogate and for the acquisition functions \eqref{eq: constrained opt} and \eqref{eq: j reported x}. For context, in Bayesian optimization, surrogate model complexity ranges from $\mathcal{O}(t^3)$ for exact GP inference to linear complexity when using sparse GPs with inducing points. Similarly, acquisition function optimization complexity varies with the chosen approach \cite{wilson2018maximizing}. Our framework inherits this flexibility and can be implemented with the same range of computational strategies.

The key distinction from standard Bayesian optimization is that our exploitation phase requires solving for a Nash equilibrium. While this is computationally challenging (as is the optimization problem in standard BO), numerous efficient heuristic solutions exist and can be readily employed. This was comprehensively explained in our response. Finding a Nash equilibrium is, in general, computationally intractable in worst-case scenarios, even for simple two-player games. This inherent hardness means that our algorithm, like \cite{tay2023no}, can be implemented exactly in settings in which the bilevel optimization problem \eqref{eq: j reported x} can be solved via convex optimization or exhaustive search \cite{shen2023online,houska2013nonlinear,angulo2021algorithms}, while otherwise requiring the use of approximation methods \cite{sinha2021solving}.

\section{Theoretical Analysis}\label{sec: theorems}
In this section, we establish theoretical results on the performance of MF-UCB-PNE in terms of the regret \eqref{eq: regret goal}.

Previous work has addressed the analysis of the regret $R(\Lambda)$ in two special cases:
\begin{itemize}
    \item \emph{Multi-fidelity single-player case}: Reference \cite{song2019general} studied the special case of MF-UCB-PNE corresponding to $N=1$ player. In this setting, the goal is to find a maximizer for the utility function of the only player. Note that the notion of PNE does not apply in this case. The work \cite{song2019general} demonstrates that the resulting policy attains a regret that grows sublinearly with the budget $\Lambda$, ensuring the asymptotic \emph{no-regret} condition
    \begin{align}
        \lim_{\Lambda\to\infty}\frac{R(\Lambda)}{\Lambda}=0.\label{eq: no regret}
    \end{align}
    \item \emph{Single-fidelity multi-player case}: Reference \cite{tay2023no} analyzes the scenario with $M=1$, i.e., with a single fidelity level, demonstrating the validity of the limit \eqref{eq: no regret} also in this case.
\end{itemize}
In this section, we extend the proof of the sublinearity of the regret \eqref{eq: regret goal} in general case with multiple fidelities and players.

\subsection{Technical Assumptions}\label{ssec: assumptions}
We start by separating the contribution of different episodes $\{\mathcal{E}_j\}_{j=1}^J$ to the regret $R(\Lambda)$ in \eqref{eq: regret goal}. This is done by defining the episode regret $R(\mathcal{E}_j)$. As detailed next, this captures the gap between the accumulated reward \eqref{eq: additive mf} incurred within the episode $\mathcal{E}_j$ and the optimal dissatisfaction performance that can be achieved given the episode budget $\Lambda_j$, namely $(1-\epsilon^*/C)(\Lambda_j/N)$.
\begin{definition}[Episode Regret]\label{definition: episode regret}
    Denote by $\Lambda_j$ the overall query cost accumulated in episode $\mathcal{E}_j$, i.e.,
    \begin{align}
        \Lambda_j=\sum_{n\in\mathcal{N}}\sum_{\tau=1}^{t_j}\lambda^{(m_{n,j,\tau})}+N,\label{eq: Lambda j}
    \end{align}where $t_j$ is the number of exploration steps in episode $\mathcal{E}_j$. Given a total budget $\Lambda$ and a total number of episodes $J$ satisfying the inequality $\sum_{j=1}^J\Lambda_j\leq\Lambda$, the regret \eqref{eq: regret goal} of MF-UCB-PNE can be expressed as a sum over episodes as
    \begin{align}
        R(\Lambda)=\sum_{j=1}^J R(\mathcal{E}_j).\label{eq: cumulative episode regret}
    \end{align}
    Specifically, in \eqref{eq: cumulative episode regret}, the regret of the $j$-th episode, $\mathcal{E}_j$, is defined as
    \begin{align}
        R(\mathcal{E}_j)=\frac{\Lambda_j}{N}\Big(1-\frac{\epsilon^*}{C}\Big)-\Big(1-\frac{\epsilon_j}{C}\Big),\label{eq: episode regret}
    \end{align}where $\epsilon_j$ is the maximum dissatisfaction \eqref{eq: player i goal} incurred by the final action profile $\mathbf{x}_{j,t_j+1}$ across all players, i.e.,
    \begin{align}
        \epsilon_j=\max\limits_{n\in\mathcal{N}}f_n(\mathbf{x}_{j,t_j+1}).\label{eq: maximum complaint}
    \end{align}
\end{definition}

% \begin{definition}[Cumulative Regret]\label{definition: cumulative regret}
%     The cumulative regret of MF-UCB-PNE with total budget $\Lambda$ is defined as
%     \begin{align}
%         R(\Lambda)=\sum_{j=1}^JR(\mathcal{E}_j),\label{eq: cumulative episode regret}
%     \end{align}
%     where $J$ is the total number of episodes satisfying $\sum_{j=1}^J\Lambda_j\leq\Lambda$.
% \end{definition}

As in \cite{tay2023no}, we make the following regularity assumptions on the true utility functions of each player.
\begin{assumption}[Function space]\label{assumption: RKHS}
    The utility function $u_n(\mathbf{x})$ for each player $n\in\mathcal{N}$ is assumed to lie in the reproducing kernel Hilbert space (RKHS) $\mathcal{H}_k$ associated with the same kernel function assumed by the MOGP prior. Therefore, the utility function of each player $n\in\mathcal{N}$ can be expressed as
    \begin{align}
        u_n(\mathbf{x})=\sum_{s=1}^{S_n}a_{s}k((\mathbf{x},M),(\mathbf{x}_{s},M)),\label{eq: rkhs function}
    \end{align}
    for some real-valued coefficients $\{a_{s}\}_{s=1}^{S_n}$ and inducing vectors $\{\mathbf{x}_{s}\}_{s=1}^{S_n}$, where $S_n$ may be unbounded.
\end{assumption}

We also make the standard assumption that the complexity of each utility function $u_n(\mathbf{x})$, as measured by the RKHS norm $||u_n||_{\mathcal{H}_k}$, is bounded \cite{chowdhury2017kernelized,zhang2024bayesian}.
\begin{assumption}[RKHS norm bound]\label{assumption: rkhs norm}
    The RKHS norm of the utility function $u_n(\mathbf{x})$ satisfies the inequality $||u_n||_{\mathcal{H}_k}\leq B$, for all players $n\in\mathcal{N}$ for some known upper bound $B>0$.
\end{assumption}

\begin{figure}[t]

%\begin{minipage}[b]{1.0\linewidth}
  \centering
  \centerline{\includegraphics[scale=0.24]{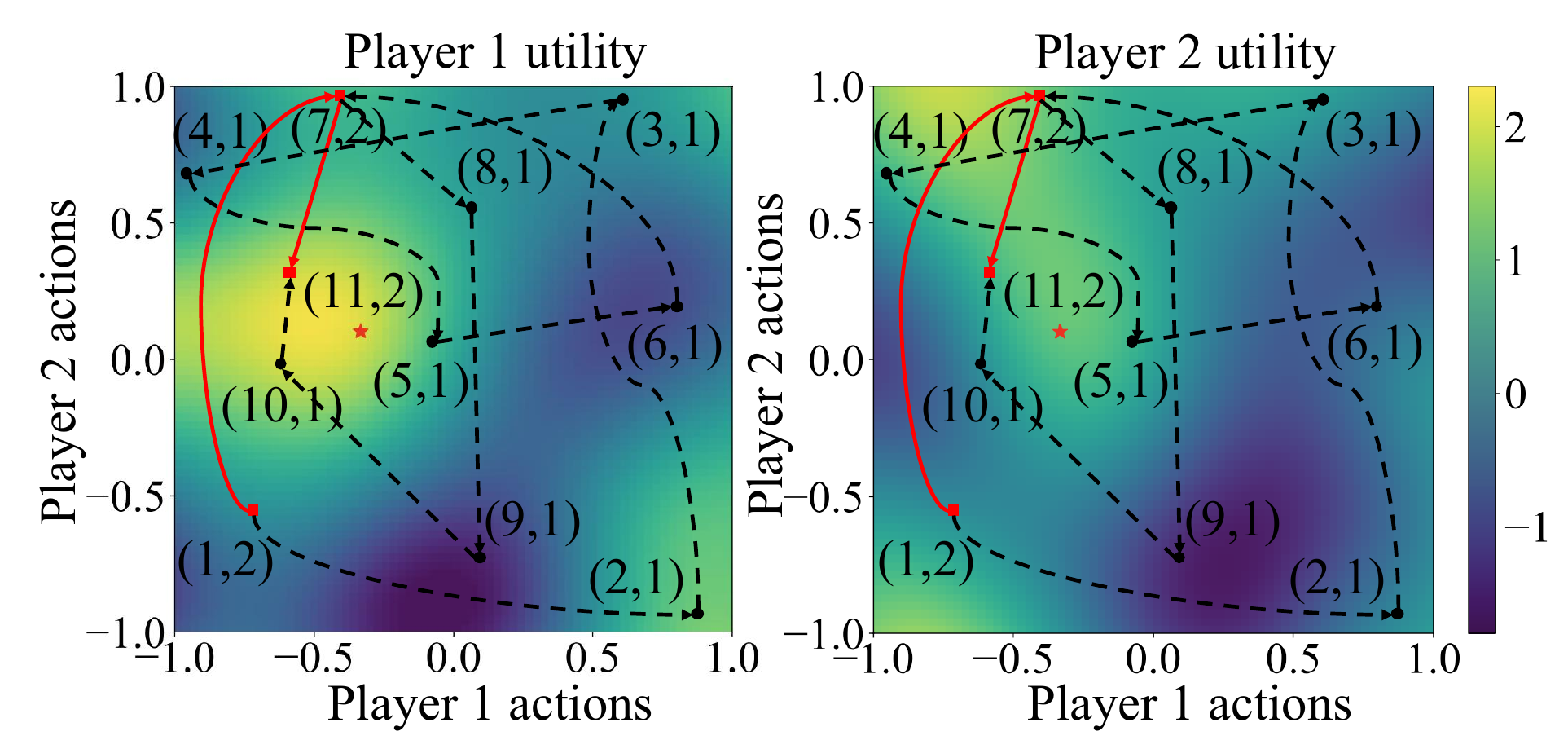}}
 \vspace{-0.3cm}
  \caption{Visualization of MF-UCB-PNE optimization trajectories against a heatmap of each player's utility function. The query budget is set to $\Lambda=32$, and the threshold in \eqref{eq: exploring condition 2} is set to $\eta=0.5$. The indexes $(t,m_{n,t})$ associated to each point in the heat map describe the time step $t$ increasing the counter across episodes and the fidelity decision $m_{n,t}$ assigned to player $n$ at time step $t$. The $\epsilon^*$-PNE is marked as a red star. The initial action at time $t=1$ and the actions made in evaluation phases are shown as red squares, while decisions made in the exploration phases are marked as black circles. Dashed black arrows follow the updates across time steps, while solid red arrows connect the actions obtained at the evaluation phases at the end of each episode.}
  \label{fig: trajectory}
\vspace{-0.3cm}
\end{figure}

\subsection{Regret Analysis}\label{ssec: regret analysis}
To obtain guarantees on the regret $R(\Lambda)$, we select the scaling parameter $\beta_{n,j}>0$ used in the confidence intervals \eqref{eq: u j confidence interval} and \eqref{eq: j Delta ci} as \cite{chowdhury2017kernelized}
\begin{align}
    \beta_{n,j}=B+4\sigma\sqrt{1+\gamma_{n,j}+\ln(1/\delta)},\label{eq: beta}
\end{align} 
where $\delta\in(0,1)$ is a parameter, and $\gamma_{n,j}$ is the \emph{maximal information gain} for player $n\in\mathcal{N}$ at episode $j$. The maximal information gain quantifies the maximum reduction in uncertainty about the utility function $u_n(\mathbf{x})$ that can be attained in episode $\mathcal{E}_j$, which contains $t_j+1$ timesteps. This is defined as the mutual information \cite{srinivas2012information,berkenkamp2023bayesian}
\begin{align}
    \gamma_{n,j}=\max\limits_{\mathbf{X}_{j},\mathbf{M}_{n,j}}\mathrm{I}(\mathbf{Y}_{n,j};\{u_n(\mathbf{x})\}_{\mathbf{x}\in\mathbf{X}_{j}}|\mathbf{X}_{j},\mathbf{M}_{n,j})\label{eq: Tj maximal mi}
\end{align}
between the observations $\mathbf{Y}_{n,j}=[y^{(m_{n,1,1})}_{n},...,y_n^{(m_{n,1,t_1+1})},\\...,y_n^{(m_{n,j,1})},...,y_n^{(m_{n,j,t_j+1})}]$ and the true utility values $\{u_n(\mathbf{x})\}_{\mathbf{x}\in\mathbf{X}_{j}}$ for a given selection of action profiles $\mathbf{X}_{j}=[\mathbf{x}_{1,1},...,\mathbf{x}_{1,t_1+1},...,\mathbf{x}_{j,1},...,\mathbf{x}_{j,t_j+1}]$ and fidelity levels $\mathbf{M}_{n,j}=[m_{n,1,1},...,m_{n,1,t_1+1},...,m_{n,j,1},...,m_{n,j,t_j+1}]$. Evaluating the information gain \eqref{eq: Tj maximal mi} is often computationally infeasible due to the need to optimize over action profiles $\mathbf{X}_j$ and fidelity levels $\mathbf{M}_{n,j}$. Therefore, it is typically approximated via a greedy algorithm \cite{srinivas2012information}. Using the notion $\mathcal{J}=\{1,...,J\}$, the following lemma demonstrates the key property of the confidence interval \eqref{eq: u j confidence interval} with the scaling factor \eqref{eq: beta}.

\begin{figure}[t]

%\begin{minipage}[b]{1.0\linewidth}
  \centering
  \centerline{\includegraphics[scale=0.21]{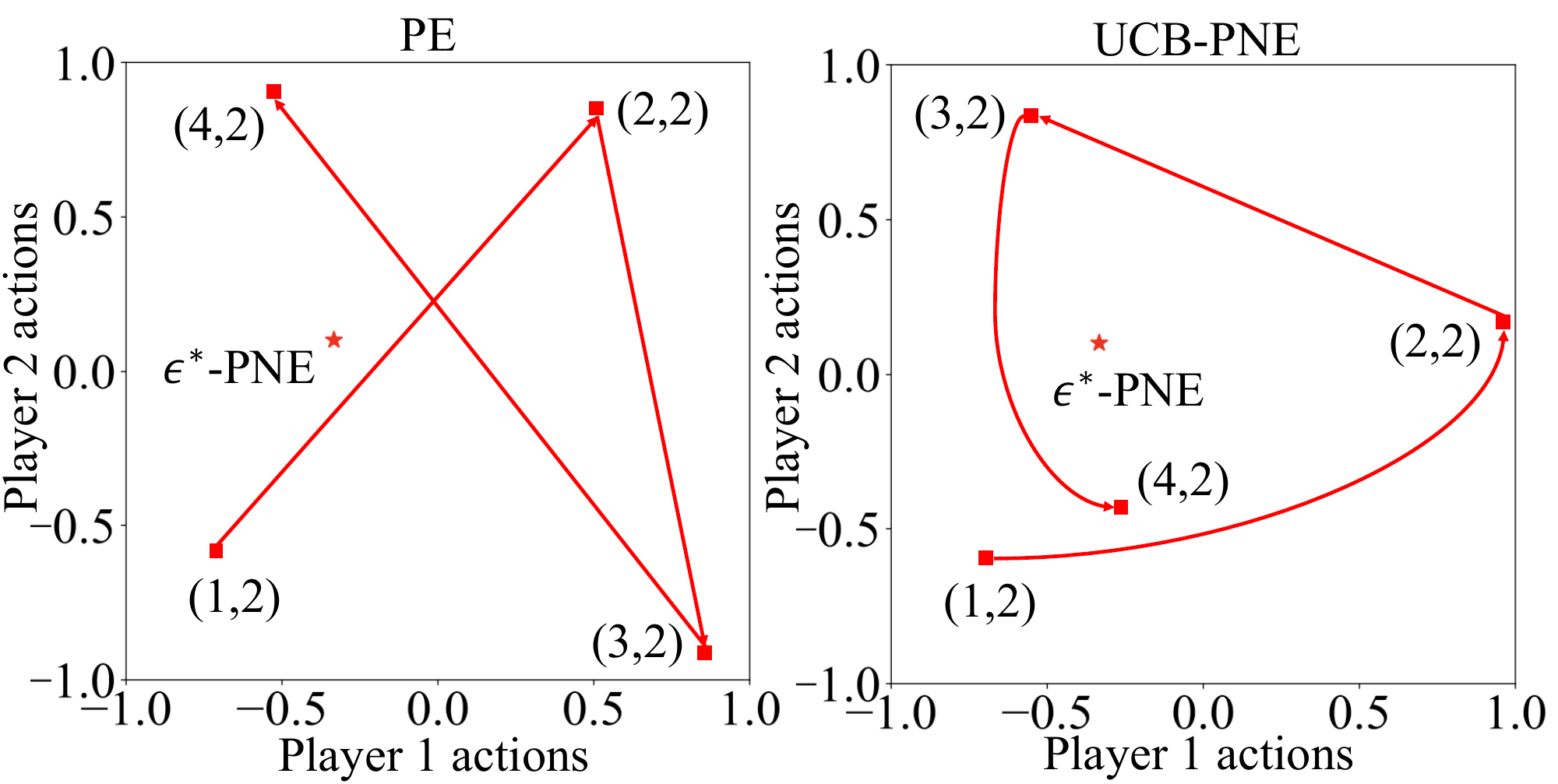}}
  \vspace{-0.2cm}
  \caption{Visualization of PE (left) and UCB-PNE (right) optimization trajectories with total query budget $\Lambda=32$. The indices $(t,m_{n,t})$ associated to each point describe the time step $t$ and the maximum-fidelity $m_{n,t}=M=2$ assigned to all players at time step $t$. The $\epsilon^*$-PNE is marked as a red star.}
  \label{fig: pe trajectory}
\vspace{-0.3cm}
\end{figure}

\begin{lemma}[Uniform Error Bound in RKHS \texorpdfstring{\cite[Theorem 2]{chowdhury2017kernelized}}{}]\label{lemma: tj rkhs utility uniform error bound}
    Let the scaling parameter $\beta_{n,j}$ be selected as in \eqref{eq: beta}. Then, under Assumptions \ref{assumption: RKHS} and \ref{assumption: rkhs norm}, with probability at least $1-\delta$ for any $\delta\in(0,1)$, the confidence interval \eqref{eq: u j confidence interval} is simultaneously valid for all inputs $\mathbf{x}\in\mathcal{X}$ and episodes $j\in\mathcal{J}$, i.e.,
    \vspace{-0.2cm}\begin{align}
        \Pr\big(&|\mu^{(M)}_{n,j,t_j}(\mathbf{x})-u_n(\mathbf{x})|\leq\beta_{n,j}\sigma^{(M)}_{n,j,t_j}(\mathbf{x}),\nonumber\\&\text{for all $\mathbf{x}\in\mathcal{X}$ and $j\in\mathcal{J}$}\big)\geq 1-\delta.\label{eq: rkhs utility uniform error bound}
    \end{align}
\end{lemma}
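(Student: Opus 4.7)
The plan is to reduce the statement to a direct application of Theorem~2 of \cite{chowdhury2017kernelized} after identifying the multi-fidelity MOGP for player $n$ with a single kernelized bandit over the augmented input space $\mathcal{X}\times\mathcal{M}$. First, I would fix an arbitrary player $n\in\mathcal{N}$ and observe that, by the MOGP construction, the mean and variance functions $\mu^{(m)}_{n,j,t_j}(\mathbf{x})$ and $\sigma^{(m)}_{n,j,t_j}(\mathbf{x})$ in \eqref{eq: mogp mean}--\eqref{eq: mogp variance} are exactly the posterior of a scalar GP prior on $\mathcal{X}\times\mathcal{M}$ with kernel $k((\mathbf{x},m),(\mathbf{x}',m'))$, given the sequence of noisy queries $\{(\mathbf{x}_{\tau},m_{n,\tau},y^{(m_{n,\tau})}_{n,\tau})\}$ collected through the end of episode $\mathcal{E}_j$, with i.i.d.\ sub-Gaussian (in fact Gaussian) noise of variance $\sigma^2$.

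Second, I would verify the two hypotheses of \cite[Theorem~2]{chowdhury2017kernelized}. The RKHS membership hypothesis follows from Assumption~\ref{assumption: RKHS}: the representation \eqref{eq: rkhs function} shows that the target function $u_n(\mathbf{x}) = u_n^{(M)}(\mathbf{x})$ is a member of the RKHS $\mathcal{H}_k$ of the joint kernel, evaluated on the slice $m=M$. The norm-bound hypothesis is supplied by Assumption~\ref{assumption: rkhs norm}, which guarantees $\|u_n\|_{\mathcal{H}_k}\leq B$. Third, I would match the information-gain quantity: the scalar $\gamma_{n,j}$ defined in \eqref{eq: Tj maximal mi} is precisely the maximum mutual information accumulated by any sequence of queries occupying the same number of total time steps as appear in episodes $1,\ldots,j$ for player $n$, which is the object denoted $\gamma_{t}$ in the Chowdhury--Gopalan bound.

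With these identifications, \cite[Theorem~2]{chowdhury2017kernelized} gives, for a single round index, the pointwise bound $|\mu^{(M)}_{n,j,t_j}(\mathbf{x})-u_n(\mathbf{x})|\leq \beta_{n,j}\,\sigma^{(M)}_{n,j,t_j}(\mathbf{x})$ for all $\mathbf{x}\in\mathcal{X}$ with probability at least $1-\delta$, with $\beta_{n,j}$ chosen as in \eqref{eq: beta}. Crucially, Chowdhury and Gopalan's result is \emph{anytime}: the failure event is controlled simultaneously for all rounds by a self-normalized martingale concentration inequality, so the same probability-$(1-\delta)$ bound holds uniformly across all episode indices $j\in\mathcal{J}$, which is exactly the conclusion \eqref{eq: rkhs utility uniform error bound}.

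The main obstacle I anticipate is not in invoking the concentration inequality itself but in the bookkeeping that justifies the reduction in the first two steps: one must check that the successive conditioning across episodes preserves the martingale structure used in \cite{chowdhury2017kernelized} (the episode lengths $t_j$ are data-dependent stopping-style quantities determined by the tests \eqref{eq: exploring condition 2}--\eqref{eq: exploring condition 3}), and that the multi-fidelity queries at intermediate $m<M$ still fit the self-normalized bound applied to the augmented-input GP. Both are handled by noting that the noise sequence $\{z_{n,t}\}$ is i.i.d.\ Gaussian and independent of the optimizer's choices, so the filtration generated by past observations makes each $z_{n,t}$ conditionally $\sigma$-sub-Gaussian regardless of how $(\mathbf{x}_{\tau},m_{n,\tau})$ is chosen; consequently the Chowdhury--Gopalan martingale argument goes through verbatim on the augmented GP, yielding the claim.
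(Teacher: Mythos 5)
Your proposal is correct and follows essentially the same route as the paper: the paper gives no standalone proof of this lemma, importing it directly as \cite[Theorem 2]{chowdhury2017kernelized} under Assumptions \ref{assumption: RKHS} and \ref{assumption: rkhs norm}, and your reduction to a kernelized bandit on the augmented input space $\mathcal{X}\times\mathcal{M}$ together with the anytime self-normalized concentration argument is precisely the content of that citation. Your closing bookkeeping on data-dependent episode lengths and the conditionally sub-Gaussian noise (independent of the chosen fidelities) is the right way to make that invocation precise.
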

Lemma \ref{lemma: tj rkhs utility uniform error bound} implies the following properties for the bounds \eqref{eq: pr f cover} on the dissatisfaction of each player $n\in\mathcal{N}$.
\begin{lemma}[Dissatisfaction Coverage \texorpdfstring{\cite[Lemma 1]{tay2023no}}{}]\label{lemma: f interval}
    By building the confidence interval as in \eqref{eq: j Delta ci} for each player $n\in\mathcal{N}$, with probability at least $1-\delta$ for any $\delta\in(0,1)$, the inequalities
    \begin{align}
        \check{f}_{n,j}(\mathbf{x})\leq f_n(\mathbf{x})\leq\hat{f}_{n,j}(\mathbf{x})\label{eq: pr f cover}
    \end{align}
    and
    \begin{align}
        \hat{f}_{n,j}(\mathbf{x})-\check{f}_{n,j}(\mathbf{x})\leq 2\beta_{n,j}\Big(\sigma^{(M)}_{n,j,t_j}(\mathbf{x})+\sigma^{(M)}_{n,j,t_j}(\mathbf{x}^{(n_j)})\Big)
    \end{align}
    are simultaneously valid for all $\mathbf{x}\in\mathcal{X}$ and episodes $j\in\mathcal{J}$.
\end{lemma}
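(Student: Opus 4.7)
The plan is to derive both inequalities directly from Lemma~\ref{lemma: tj rkhs utility uniform error bound}, exploiting the fact that the dissatisfaction $f_n(\mathbf{x})$ is an affine functional of $u_n(\cdot)$---a best-response maximum minus a pointwise evaluation---so the utility confidence bounds propagate to confidence bounds on $f_n$ through the monotonicity and subadditivity of the $\max$ operator. The first step is to condition on the high-probability event $\mathcal{A}$ on which the utility interval $[\check{u}_{n,j}(\mathbf{x}),\hat{u}_{n,j}(\mathbf{x})]$ covers $u_n(\mathbf{x})$ simultaneously for every $n \in \mathcal{N}$, $\mathbf{x} \in \mathcal{X}$, and $j \in \mathcal{J}$. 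Lemma~\ref{lemma: tj rkhs utility uniform error bound} delivers this guarantee with probability $\geq 1-\delta$ for a single player, and a union bound across the $N$ players (with per-player failure probability $\delta/N$ absorbed into the definition of $\beta_{n,j}$ in~\eqref{eq: beta}) yields the joint event.

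For the coverage inequality, on $\mathcal{A}$ the upper bound follows by applying $u_n \leq \hat{u}_{n,j}$ inside the inner maximization and $-u_n \leq -\check{u}_{n,j}$ outside:
\begin{align*}
f_n(\mathbf{x}) &= \max_{\mathbf{x}_n'}u_n(\mathbf{x}_n',\mathbf{x}_{-n}) - u_n(\mathbf{x}) \\
&\leq \max_{\mathbf{x}_n'}\hat{u}_{n,j}(\mathbf{x}_n',\mathbf{x}_{-n}) - \check{u}_{n,j}(\mathbf{x}) = \hat{f}_{n,j}(\mathbf{x}),
\end{align*}
and the mirror-image chain with the roles of $\check{u}_{n,j}$ and $\hat{u}_{n,j}$ reversed yields $\check{f}_{n,j}(\mathbf{x}) \leq f_n(\mathbf{x})$. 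For the width bound I would decompose $\hat{f}_{n,j}(\mathbf{x})-\check{f}_{n,j}(\mathbf{x})$ as $[\hat{u}_{n,j}(\mathbf{x})-\check{u}_{n,j}(\mathbf{x})] + [\max_{\mathbf{x}_n'}\hat{u}_{n,j}(\mathbf{x}_n',\mathbf{x}_{-n}) - \max_{\mathbf{x}_n'}\check{u}_{n,j}(\mathbf{x}_n',\mathbf{x}_{-n})]$. The first bracket equals $2\beta_{n,j}\sigma^{(M)}_{n,j,t_j}(\mathbf{x})$ directly by the definition of the utility interval~\eqref{eq: u j confidence interval}. For the second bracket, the standard ``max-minus-max'' manoeuvre---plugging the upper-confidence best response $\mathbf{x}_n^{\dagger} = \argmax_{\mathbf{x}_n'}\hat{u}_{n,j}(\mathbf{x}_n',\mathbf{x}_{-n})$ into the subtracted $\max$---gives the bound $\hat{u}_{n,j}(\mathbf{x}_n^{\dagger},\mathbf{x}_{-n}) - \check{u}_{n,j}(\mathbf{x}_n^{\dagger},\mathbf{x}_{-n}) = 2\beta_{n,j}\sigma^{(M)}_{n,j,t_j}(\mathbf{x}_n^{\dagger},\mathbf{x}_{-n})$, and identifying $(\mathbf{x}_n^{\dagger},\mathbf{x}_{-n})$ with the exploring action profile $\mathbf{x}^{(n_j)}$ from~\eqref{eq: exploring strategy profile} completes the derivation.

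The main obstacle is notational rather than mathematical: the ``for all $\mathbf{x}\in\mathcal{X}$ and $j\in\mathcal{J}$'' phrasing must be read with the understanding that the point $\mathbf{x}^{(n_j)}$ appearing on the right-hand side is the upper-confidence best response \emph{associated with} the relevant $\mathbf{x}$, which coincides with the algorithm's choice~\eqref{eq: exploring strategy profile} in the setting $\mathbf{x}=\tilde{\mathbf{x}}^{(j)}$ and $n=n_j$ that is used subsequently in the regret decomposition. The remaining care is to verify that the rescaling $\delta\mapsto\delta/N$ in the union bound is consistent with~\eqref{eq: beta} so that no factors of $N$ are lost, and to note that the $\max$-monotonicity steps are tight when evaluated at the respective argmax, so that the two $\sigma^{(M)}_{n,j,t_j}$ terms in the width bound faithfully capture the looseness of the confidence interval at $\mathbf{x}$ and at its best-response argmax.
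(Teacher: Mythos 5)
Your proof is correct, and it matches the standard argument: the paper itself does not prove this lemma but imports it from \cite[Lemma 1]{tay2023no}, and your reconstruction (condition on the uniform utility-coverage event of Lemma \ref{lemma: tj rkhs utility uniform error bound}, propagate through the $\max$ by monotonicity for coverage, then the max-minus-max step at the UCB best response for the width, identified with $\mathbf{x}^{(n_j)}$ when $\mathbf{x}=\tilde{\mathbf{x}}^{(j)}$, $n=n_j$) is exactly that argument, including the correct reading of the $\mathbf{x}^{(n_j)}$ term. One small bookkeeping remark: the lemma is a per-player statement at confidence $1-\delta$, and the paper defers the union bound over players to Theorem \ref{theorem: regret bound} (hence the $1-N\delta$ there with $\beta_{n,j}$ as in \eqref{eq: beta} unchanged), so your proposed $\delta\mapsto\delta/N$ rescaling inside $\beta_{n,j}$ is unnecessary here and would be slightly inconsistent with the paper's constants, though it does not affect correctness.
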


Using these results, the regret $R(\Lambda)$ of MF-UCB-PNE is upper bounded as follows.

\begin{theorem}[Regret Bound of MF-UCB-PNE]\label{theorem: regret bound}
    Under Assumptions \ref{assumption: RKHS} and \ref{assumption: rkhs norm}, with probability at least $1-N\delta$, with $\delta\in(0,1/N)$, the sequence of decision pairs selected by MF-UCB-PNE given total query budget $\Lambda$ incurs a cumulative regret bounded as
    \begin{align}
        R(\Lambda)\leq\bigg(1-\frac{\epsilon^*}{C}\bigg)\sqrt{\Lambda}\cdot\gamma_{J}+\frac{4\beta_{J}}{C}\sqrt{4(J+2)\gamma_{J}},\label{eq: regret upper bound}
    \end{align}where $\beta_{J}=\max_{n\in\mathcal{N}}\beta_{n,J}$, and $\gamma_{J}=\max_{n\in\mathcal{N}}\gamma_{n,J}$.
    % \begin{align}
    %     \gamma_{\mathcal{L}}=\sum_{j=1}^J\sum_{n=1}^N\mathrm{I}(\mathbf{y}_{n,j,t_j};\{u_{n}(\mathbf{x})\}_{\mathbf{x}\in\mathcal{L}_{j}}|\mathcal{L}_j,\mathcal{D}_{n,j,0}).\label{eq: gamma L}
    % \end{align}
\end{theorem}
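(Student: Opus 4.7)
The plan is to carry out a three-stage argument: (i) lift the per-player high-probability confidence calibration to a joint statement over all players; (ii) decompose the cumulative regret into an exploration-cost contribution and an evaluation-suboptimality contribution, using Definition~\ref{definition: episode regret}; and (iii) bound the two contributions separately, using the information-to-cost stopping rule \eqref{eq: exploring condition 3} for the first and a UCB-PNE-style argument following \cite{tay2023no} for the second. A union bound over the $N$ events in Lemma~\ref{lemma: tj rkhs utility uniform error bound} shows that, with probability at least $1-N\delta$, the confidence intervals $\mathcal{U}_{n,j}$ and $\mathcal{C}_{n,j}$ of \eqref{eq: u j confidence interval}--\eqref{eq: j Delta ci} simultaneously cover the true utilities and dissatisfactions for all players, episodes, and inputs; the remainder of the proof is deterministic on this event. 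Using $\Lambda/N=\sum_{j=1}^J\Lambda_{\mathcal{L}_j}/N+J+\text{residual}/N$ and the definition of $R(\mathcal{E}_j)$, the regret rearranges as
\begin{align*}
R(\Lambda)=\Big(1-\tfrac{\epsilon^*}{C}\Big)\tfrac{1}{N}\sum_{j=1}^{J}\Lambda_{\mathcal{L}_j}+\tfrac{1}{C}\sum_{j=1}^{J}(\epsilon_j-\epsilon^*)+O(1),
\end{align*}
where the $O(1)$ term absorbs a leftover budget of at most $N-1$.

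For the exploration-cost sum, I rely on the fact that the exploration phase of episode $\mathcal{E}_j$ terminates only when the cumulative information gain per unit cost falls below $1/\sqrt{\Lambda^{\text{tot}}_j}$. Hence at the last accepted step $t_j$ the reverse inequality holds:
\begin{align*}
\Lambda_{\mathcal{L}_j}\le\sqrt{\Lambda^{\text{tot}}_j}\sum_{n=1}^{N}\sum_{\tau=1}^{t_j}\mathrm{I}\bigl(y_n^{(m_{n,j,\tau})};u_n(\mathbf{x}_{j,\tau})\,\big|\,\mathcal{D}_{n,j,\tau-1}\bigr).
\end{align*}
Using $\Lambda^{\text{tot}}_j\le\Lambda$, summing over $j\in\mathcal{J}$, and applying the chain rule so that the per-player double sum telescopes into a single conditional mutual information upper bounded by $\gamma_{n,J}\le\gamma_J$, I obtain $\sum_{j}\Lambda_{\mathcal{L}_j}\le N\sqrt{\Lambda}\,\gamma_J$, which reproduces the first term $(1-\epsilon^*/C)\sqrt{\Lambda}\,\gamma_J$ of \eqref{eq: regret upper bound} after dividing by $N$.

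For the evaluation-gap sum I follow the UCB-PNE argument of \cite{tay2023no}. The optimistic minimax report \eqref{eq: j reported x} and the lower inclusion in Lemma~\ref{lemma: f interval} give, for any $\mathbf{x}^*\in\mathcal{X}^{(\epsilon^*)}$,
\begin{align*}
\max_n\check{f}_{n,j}(\tilde{\mathbf{x}}^{(j)})\le\max_n\check{f}_{n,j}(\mathbf{x}^*)\le\max_n f_n(\mathbf{x}^*)\le\epsilon^*.
\end{align*}
Combining this with $f_n\le\hat{f}_{n,j}$, the width bound of Lemma~\ref{lemma: f interval}, and the maximum-variance selection rule \eqref{eq: x t+1} (which forces $\sigma^{(M)}_{n,j,t_j}(\mathbf{x}_{j,t_j+1})$ to dominate the variances at both $\tilde{\mathbf{x}}^{(j)}$ and $\mathbf{x}^{(n_j)}$), yields the per-episode bound $\epsilon_j-\epsilon^*\le 4\beta_J\max_{n}\sigma^{(M)}_{n,j,t_j}(\mathbf{x}_{j,t_j+1})$. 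Cauchy--Schwarz then gives $\sum_j(\epsilon_j-\epsilon^*)\le 4\beta_J\sqrt{J\sum_j\max_n[\sigma^{(M)}_{n,j,t_j}(\mathbf{x}_{j,t_j+1})]^2}$, and the standard telescoping GP-variance lemma, applied to the augmented sequence of maximum-fidelity queries (including the two candidate profiles $\tilde{\mathbf{x}}^{(j)}$ and $\mathbf{x}^{(n_j)}$ whose variances enter Lemma~\ref{lemma: f interval}, which is the origin of the $J+2$ padding), produces $\sum_j[\sigma^{(M)}_{n,j,t_j}(\mathbf{x}_{j,t_j+1})]^2\le 4\gamma_J$, yielding the second term $\frac{4\beta_J}{C}\sqrt{4(J+2)\gamma_J}$.

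The hardest part is the interaction between multi-fidelity exploration and the PNE-specific evaluation analysis. Lower-fidelity observations shrink the maximum-fidelity posterior variance $\sigma^{(M)}_{n,j,t_j}$ through the MOGP kernel, so the telescoping variance identity has to be stated over the entire mixed-fidelity observation sequence while the regret bound itself must only involve the maximum-fidelity variance at evaluation points; reconciling these two views requires careful conditioning on the posterior stream. At the same time, the stopping rule \eqref{eq: exploring condition 3} must be loose enough to allow the evaluation variances $\sigma^{(M)}_{n,j,t_j}$ to shrink sufficiently for the second-term telescoping to close, yet tight enough that the cumulative exploration cost never grows faster than $\sqrt{\Lambda}\,\gamma_J$; tracking both candidate profiles $\tilde{\mathbf{x}}^{(j)}$, $\mathbf{x}^{(n_j)}$ in the variance accounting (and showing they account for only a constant padding $+2$) is the most delicate book-keeping step.
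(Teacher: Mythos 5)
Your proposal follows essentially the same route as the paper's proof: the same union bound over the $N$ players to reach confidence $1-N\delta$, the same episode-wise decomposition of $R(\Lambda)$ into an exploration-cost term and an evaluation dissatisfaction-gap term, the same use of the stopping rule \eqref{eq: exploring condition 3} to obtain $\sum_{j}\Lambda_{\mathcal{L}_j}\leq N\sqrt{\Lambda}\,\gamma_J$, and the same UCB-PNE-style argument via Lemma \ref{lemma: tj rkhs utility uniform error bound}, Lemma \ref{lemma: f interval}, and the telescoping maximum-fidelity variance bound of \cite{chowdhury2017kernelized} for the second term. The only deviations are cosmetic (the explicit leftover-budget $O(1)$ term and the Cauchy--Schwarz phrasing of the final step, which matches the paper's $\sqrt{4(J+2)\gamma_J}$ up to the same bookkeeping), so the sketch is correct at the paper's own level of rigor.
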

\begin{proof}
    Please see Appendix \ref{appendix: theorem 1 proof}.
\end{proof}

Leveraging the per-episode decomposition \eqref{eq: episode regret}, the regret upper bound \eqref{eq: regret upper bound} is the sum of a regret term incurred while exploring lower fidelity utilities and of a regret term that accounts for the dissatisfaction levels accrued during the evaluation phases. Specifically, the first term in \eqref{eq: regret upper bound} is derived by considering the largest regret obtained during the exploration phases. In contrast, the second term in \eqref{eq: regret upper bound} is akin to the regret bound for the single-fidelity multi-player case studied in \cite{tay2023no}.

The upper bound \eqref{eq: regret upper bound} can be leveraged to derive the following no-regret result.

\begin{corollary}[No-Regret of MF-UCB-PNE]\label{corollary: no regret}
    Adopting the RBF kernels \eqref{eq: rbf kernels} and \eqref{eq: fidelity kernel} in the MOGP surrogate model, under Assumptions \ref{assumption: RKHS} and \ref{assumption: rkhs norm}, the MF-UCB-PNE policy ensures the asymptotic no-regret performance \eqref{eq: no regret}.
\end{corollary}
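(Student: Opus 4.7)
The plan is to obtain Corollary~\ref{corollary: no regret} as an asymptotic consequence of the finite-budget bound \eqref{eq: regret upper bound} established in Theorem~\ref{theorem: regret bound}. Concretely, I would show that under the RBF kernels \eqref{eq: rbf kernels}--\eqref{eq: fidelity kernel} each of the two summands on the right-hand side of \eqref{eq: regret upper bound} grows strictly slower than $\Lambda$, so that dividing by $\Lambda$ and passing to the limit yields \eqref{eq: no regret} with probability at least $1-N\delta$. If one additionally lets $\delta$ shrink slowly with $\Lambda$ (e.g.\ $\delta_\Lambda=\Lambda^{-2}$), a Borel--Cantelli argument upgrades the statement to an almost-sure limit while only inflating $\beta_J$ through the $\ln(1/\delta)$ term in \eqref{eq: beta} by a logarithmic factor.

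The bound hinges on three asymptotic estimates. First, the episode count obeys $J\leq\lfloor\Lambda/N\rfloor = O(\Lambda)$, because each episode $\mathcal{E}_j$ contains an evaluation round whose cost is exactly $N$, so the budget decomposition $\sum_j\Lambda_j\leq\Lambda$ forces the bound. The total number of queries accumulated across all episodes is likewise at most $\Lambda/\lambda^{(1)}=O(\Lambda)$. Second, for a GP with an RBF kernel on a compact domain $\mathcal{X}\subset\mathbbm{R}^d$, the maximum information gain after $T$ observations is well-known to satisfy $\gamma_T=O((\log T)^{d+1})$ \cite{srinivas2012information}. This bound transfers to the multi-fidelity MOGP \eqref{eq: ar model} because the fidelity index lives on the finite set $\mathcal{M}=\{1,\dots,M\}$ and both $\kappa$ and $\kappa^{(m)}$ are RBF; the resulting joint kernel has the same polylogarithmic eigenvalue-decay behavior, giving $\gamma_J=\max_{n}\gamma_{n,J}=O((\log\Lambda)^{d+1})$. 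Third, the definition \eqref{eq: beta} immediately yields $\beta_J=O(\sqrt{\gamma_J})=O((\log\Lambda)^{(d+1)/2})$, since $B$ and $\sigma$ are fixed constants.

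Substituting these three estimates into \eqref{eq: regret upper bound}, the first term satisfies
\begin{equation*}
\Bigl(1-\tfrac{\epsilon^*}{C}\Bigr)\sqrt{\Lambda}\cdot\gamma_J = O\bigl(\sqrt{\Lambda}\,(\log\Lambda)^{d+1}\bigr),
\end{equation*}
while the second term satisfies
\begin{equation*}
\tfrac{4\beta_J}{C}\sqrt{4(J+2)\gamma_J} = O\bigl(\gamma_J\sqrt{J}\bigr) = O\bigl((\log\Lambda)^{d+1}\sqrt{\Lambda/N}\bigr).
\end{equation*}
Both are $o(\Lambda)$, so that $R(\Lambda)/\Lambda\to 0$ with probability at least $1-N\delta$, which is exactly the asymptotic no-regret property \eqref{eq: no regret}.

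The main obstacle is the second step: justifying that the classical polylogarithmic information-gain bound for single-output RBF GPs carries over to the auto-regressive MOGP \eqref{eq: ar model} that underlies MF-UCB-PNE. The cleanest route is to invoke the corresponding estimates already developed in the multi-fidelity BO literature \cite{kandasamy2017multi,song2019general}, where $\gamma_T$ is controlled in terms of the spectral decay of the MOGP kernel; alternatively, one can argue directly that the cross-fidelity covariance induced by \eqref{eq: ar model} is a finite mixture of RBF kernels over $\mathcal{X}$, so its Mercer eigenvalues inherit exponential decay up to a multiplicative factor depending on $M$ and on the correlation coefficients $\{\rho^{(m)}\}$. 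Once this extension is in place, the remaining calculations are routine substitutions into \eqref{eq: regret upper bound}.
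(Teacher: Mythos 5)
Your proposal follows the same skeleton as the paper's proof: take the finite-budget bound \eqref{eq: regret upper bound}, establish $\gamma_J=\mathcal{O}((\log\Lambda)^{d+1})$ and $\beta_J=\mathcal{O}(\sqrt{\gamma_J})$ under the RBF kernels, use $J=\mathcal{O}(\Lambda)$ (and the total step count $T_J=\mathcal{O}(\Lambda/\lambda^{(1)})$), and conclude both terms are $o(\Lambda)$ so that $R(\Lambda)/\Lambda\to 0$; your final orders match the paper's display \eqref{eq: final O}. The one substantive difference is the step you yourself flag as the obstacle: transferring the polylogarithmic information-gain bound to the multi-fidelity MOGP. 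The paper does not argue through the spectrum of the joint kernel on $\mathcal{X}\times\mathcal{M}$; instead it exploits the fact that $\gamma_{n,j}$ in \eqref{eq: Tj maximal mi} is the information about the \emph{maximum-fidelity} function $u_n$, shows that the per-step maximal mutual information is always attained by querying at fidelity $M$, and then uses submodularity together with the greedy $(1-1/e)^{-1}$ bound to reduce $\gamma_{n,J}$ to the standard single-output RBF information gain, to which \cite[Theorem 5]{srinivas2012information} applies directly. Your alternative routes (invoking \cite{kandasamy2017multi,song2019general}, or a data-processing/eigenvalue-decay argument for the joint kernel, noting $\mathcal{M}$ is finite) are plausible and would yield the same polylog order, but as written they remain a sketch, whereas this is precisely the part the paper proves explicitly; if you adopt the paper's reduction you can drop that dependence on external multi-fidelity results. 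Your additional Borel--Cantelli upgrade to an almost-sure statement is extra and not needed for \eqref{eq: no regret}.
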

\begin{proof}
    See Appendix \ref{appendix: corollary proof}.
\end{proof}

Finally, we note that the threshold $\eta$ in condition \eqref{eq: exploring condition 2} is not explicitly included in the regret \eqref{eq: regret upper bound}, but setting smaller $\eta$ would lead to shorter episodes with larger $J$, thus increasing the regret upper bound \eqref{eq: regret upper bound}. As verified via numerical results in the following section, this result suggests that it is preferable to choose larger values of the threshold $\eta$ when total query budget $\Lambda$ increases.

\section{Experiments}\label{sec: experiment}
In this section, we empirically evaluate the performance of the proposed MF-UCB-PNE on the synthetic random utility functions adopted in \cite{tay2023no}, as well as on applications to wireless communications systems at the physical and medium access control layers \cite{liang2007power,gkatzikis2011medium}.

\subsection{Benchmarks}\label{ssec: benchmarks}
To the best of our knowledge, MF-UCB-PNE is the first multi-fidelity optimization framework for evaluating PNEs. In light of this, the following \emph{single-fidelity} benchmarks are considered  in the experiments:
\begin{itemize}
    \item \emph{Probability of Equilibrium} (PE) \cite{picheny2019bayesian}: PE selects iterates $\mathbf{x}_t$ with the goal of maximizing the probability of obtaining a PNE, where the probability is approximated using the GP posterior at the maximum fidelity level $M$.
    \item \emph{UCB-PNE} \cite{tay2023no}: UCB-PNE applies only the evaluation phase of MF-UCB-PNE at each episode, choosing each iterate $\mathbf{x}_t$ as in \eqref{eq: x t+1}.
\end{itemize}

Being single fidelity schemes, both PE and UCB-PNE select the maximum-fidelity levels $\mathbf{m}_t=M\cdot\mathbf{1}$ at each round $t$.

\subsection{Synthetic Random Utility Functions}\label{ssec: synthetic game}
For the first experiment, we consider a game with $N=2$ players and $M=2$ fidelity levels. The $2\times 1$ multi-fidelity function vector $\mathbf{u}_n(\mathbf{x})$ for each player $n$ is sampled from the MOGP prior with lengthscale parameters in the RBF kernels \eqref{eq: rbf kernels} and \eqref{eq: fidelity kernel} given by $h=0.89$ and $\zeta^{(1)}=\zeta^{(2)}=0.78$, respectively. The correlation parameter in the auto-regressive model \eqref{eq: ar model} is set as $\rho^{(1)}=0.768$, and the observations noise variance is $\sigma^2=0.1$. The action space for each player is restricted to the interval $\mathcal{X}_n=[-1,1]$, and the cost levels are $\lambda^{(1)}=1$ and $\lambda^{(2)}=8$. For all schemes, we consider choices both a well-specified MOGP, which adopts the same parameters used in the random functions generation process, and a misspecified MOGP, which sets different parameters $h=0.62,\zeta^{(1)}=\zeta^{(2)}=0.41$ and  $\rho^{(1)}=0.625$. To evaluate the $\epsilon$-PNE of the synthetic game, we uniformly discretize the continuous action space $\mathcal{X}_n$ into $128$ strategies for every player, and we find a PNE by a brute-force grid search.

To start, for a well-specified MOGP, Fig. \ref{fig: trajectory} shows the trajectories of the decisions made by MF-UCB-PNE with total budget $\Lambda=32$ and threshold $\eta=0.5$. The two figures show a heat map of the utility functions for the two players. For these utility functions, the smallest achievable dissatisfaction level is $\epsilon^*=0.19$, and there is a unique $\epsilon^*$-PNE, which is shown as a red star. The decisions made during the exploration phases are denoted by black circles, while the actions selected during the evaluation phases are represented by red squares. For comparison, Fig. \ref{fig: pe trajectory} shows the evolution of the iterates of the benchmarks PE and UCB-PNE for the same budget $\Lambda=32$. As seen in Fig. \ref{fig: pe trajectory}, by always targeting high-fidelity solutions, the benchmark methods run out of budget early, obtaining solutions that are far from the $\epsilon^*$-PNE.

As observed in Fig. \ref{fig: trajectory}, the iterates during the exploration phase of MF-UCB-PNE tend to be spread out over the entire action space. This validates the role of the acquisition function \eqref{eq: select exploring decision} enabling exploration by reducing the uncertainty about the utility function $u_n(\mathbf{x})$. In contrast, the iterates during the evaluation phase tend to concentrate on areas in which both utility functions are estimated to be large, thus moving close to an $\epsilon^*$-PNE solution.

\begin{figure}[t!]
     \centering
     \hspace{-0.3cm}
     \subfigure[]{
         \centering
         \includegraphics[scale=0.157]{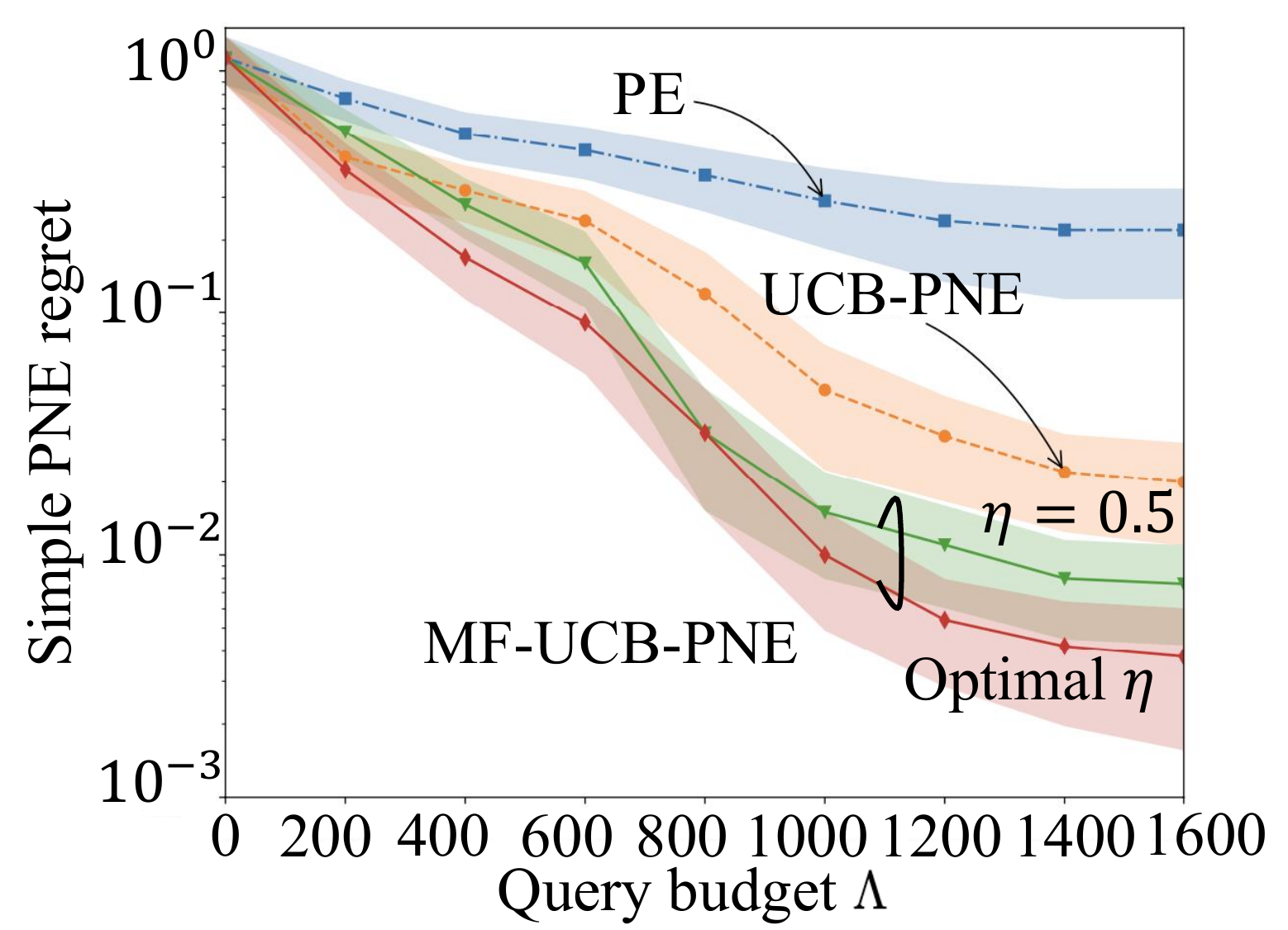}
         % \caption{$y=x$}
         \label{fig:y equals x}
     }
     \hspace{-0.5cm}
     \subfigure[]{
         \centering
         \includegraphics[scale=0.162]{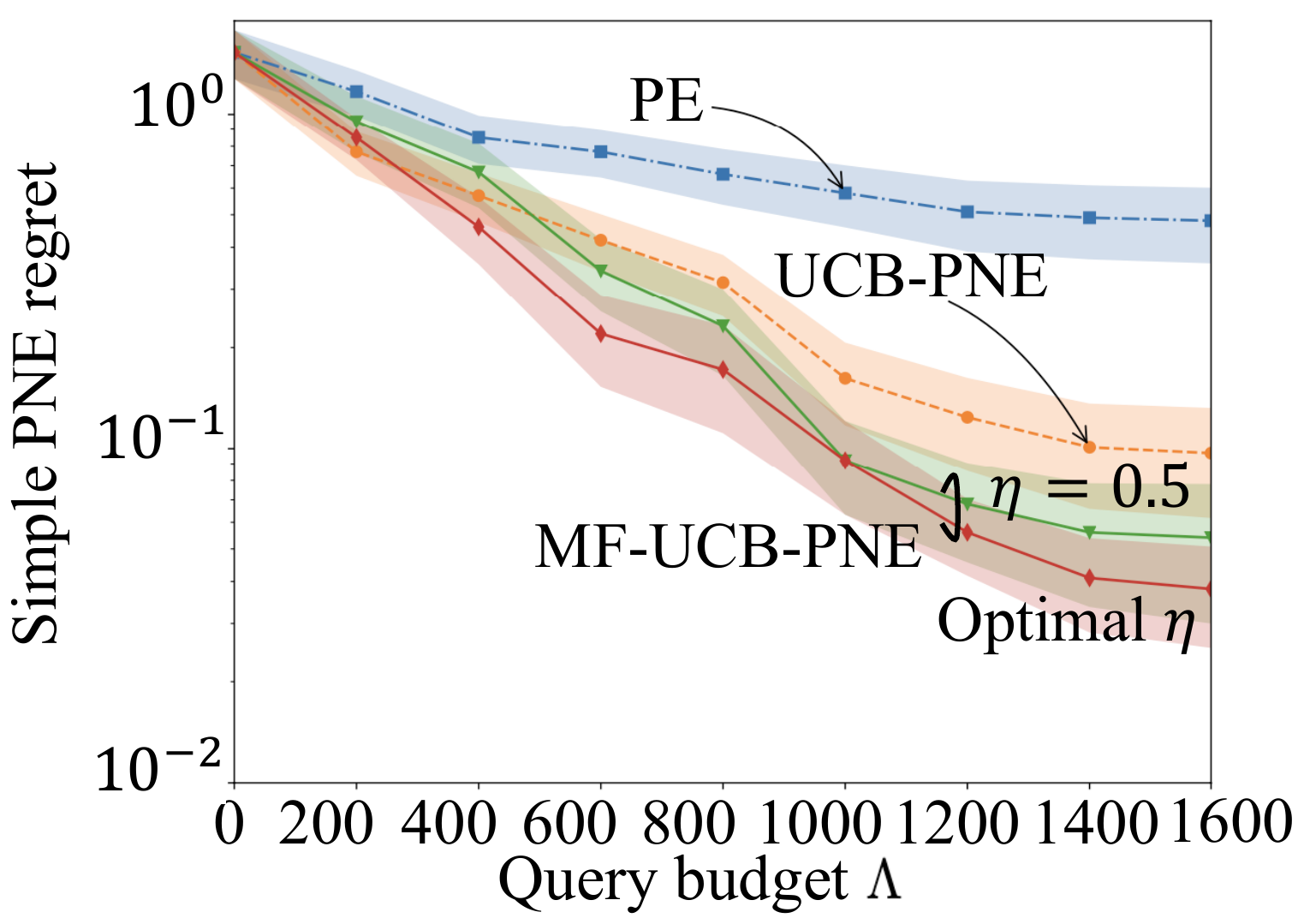}
         % \caption{$y=3\sin x$}
         \label{fig:three sin x}
     }
     \vspace{-0.2cm}
        \caption{Simple pure Nash regret \eqref{eq: simple regret} against the total query budget $\Lambda$ for a well-specified MOGP (a) and for a misspecified MOGP (b), attained by PE (blue dash-dotted line), UCB-PNE (orange dashed line), MF-UCB-PNE with $\eta=0.5$ (green solid line) and MF-UCB-PNE with optimal $\eta$ at different budget (red solid line). }
  \label{fig: regret vs budget}
  \vspace{-0.3cm}
\end{figure}

To corroborate the main conclusion obtained from the observations of Fig. \ref{fig: trajectory} and Fig. \ref{fig: pe trajectory}, as in \cite{tay2023no} we analyze the \emph{simple PNE regret} 
\begin{align}
    R_J=\max\limits_{n\in\mathcal{N},\mathbf{x}\in\{\mathcal{E}_j\}_{j=1}^J}\{f_n(\mathbf{x})\}-\epsilon^*,\label{eq: simple regret}
\end{align}
which evaluates the gap between the best dissatisfaction level attained throughout the optimization process and the minimum value $\epsilon^*$. Unless stated otherwise, all the results are averaged over $50$ experiments, reporting $90\%$ confidence level.

We now increase the number of players to $N=10$, as well as the discrete fidelity space to $M=4$ levels with corresponding query costs $\lambda^{(1)}=1, \lambda^{(2)}=2, \lambda^{(3)}=4$, and $\lambda^{(4)}=8$, and we plot the simple PNE regret \eqref{eq: simple regret} as a function of total query budget $\Lambda$ in Fig. \ref{fig: regret vs budget}, while Fig. \ref{fig: regret vs eta} considers also the dependence on the threshold $\eta$. For MF-UCB-PNE, in Fig. \ref{fig: regret vs budget}, we consider a fixed threshold $\eta=0.5$ in the stopping condition \eqref{eq: exploring condition 2}, as well as the optimal threshold $\eta$ obtained by a discrete exhaustive search.

For a well-specified MOGP, as seen in Fig. \ref{fig:y equals x} and Fig. \ref{fig: regret vs eta}, MF-UCB-PNE with optimal threshold $\eta$ obtains the best regret performance across all query budget settings. As shown in Fig. \ref{fig: regret vs eta}, the optimal value of threshold $\eta$ increase with the budget $\Lambda$, indicating that longer exploration phases are beneficial when the budget constraints are less strict. As illustrated in Fig. \ref{fig:three sin x}, for a misspecified MOGP case,  the simple PNE regret of all schemes are significantly increased due to the inaccurate surrogate modeling adopted in the decisions acquisition process. In this case, for very limited query budgets, here $\Lambda\leq400$, MF-UCB-PNE may be outperformed by UCB-PNE owing to its reliance on a more complex MOGP. However, even in this setting, when the budget $\Lambda$ is large enough, MF-UCB-PNE outperforms all other schemes with an optimized choice of the parameter $\eta$. 

\begin{figure}[t]

%\begin{minipage}[b]{1.0\linewidth}
  \centering
  \centerline{\includegraphics[scale=0.28]{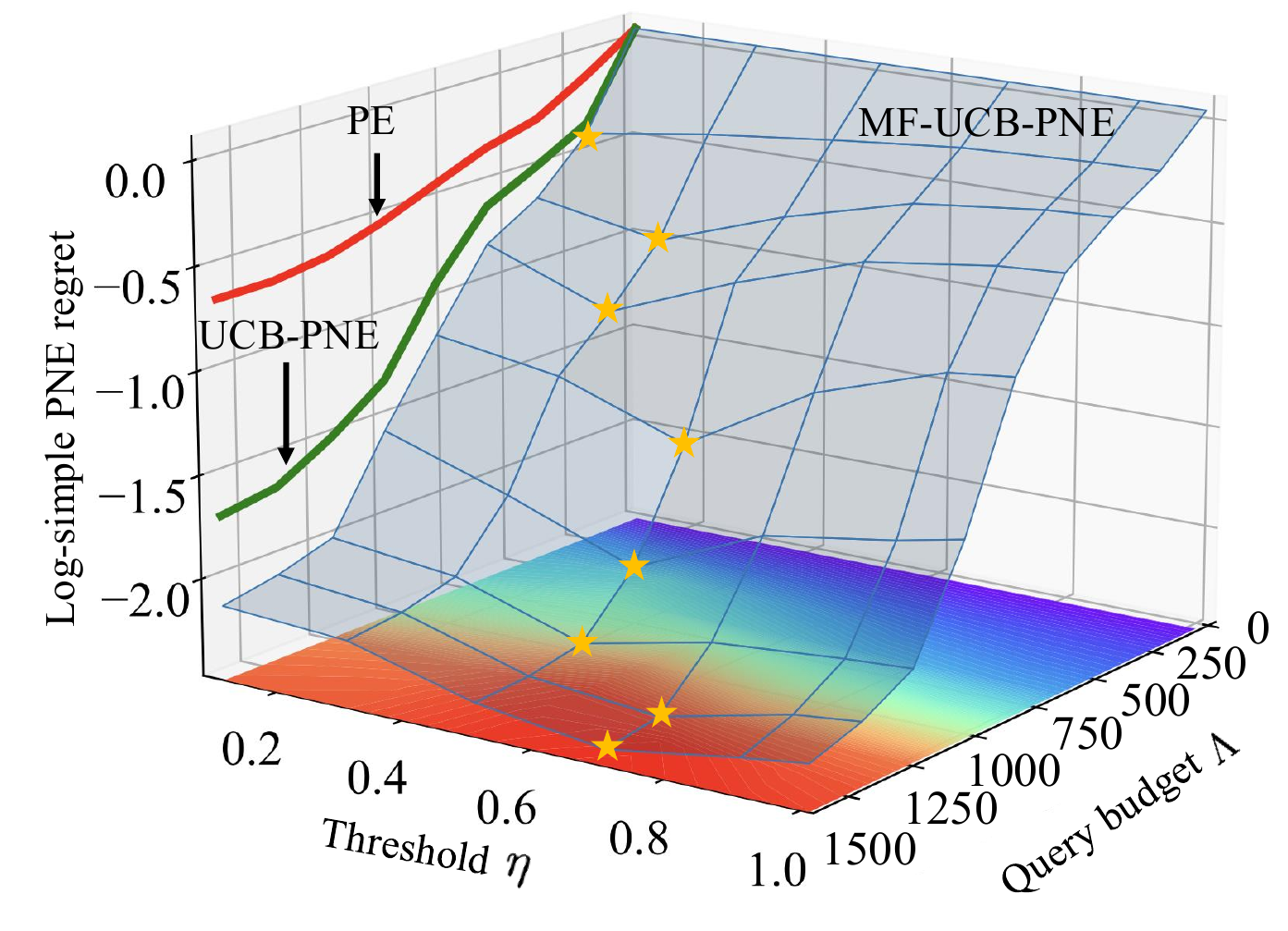}}
  \vspace{-0.3cm}
  \caption{Log-simple PNE regret \eqref{eq: simple regret} against the threshold $\eta$ in \eqref{eq: exploring condition 2} and the total query budget $\Lambda$, for PE (red solid line), UCB-PNE (green solid line), and MF-UCB-PNE (blue surface). The optimal values of the threshold $\eta$ given the corresponding total query budget $\Lambda$ are labeled as yellow stars.}
  \label{fig: regret vs eta}
\vspace{-0.3cm}
\end{figure}

\subsection{Optimizing Power Allocation in Wireless Systems}\label{ssec: uplink power}
In this section, we study an application to the physical layer of a wireless system. In this setting, $N$ interfering transmission links, which act as players, are assigned transmission powers $\mathbf{x}$ by a centralized coordinator. Each $n$-th link is characterized by a direct channel gain $h_{n,n}$, while interference between the transmitter of the $n$-th link and the receiver of the $m$-th link is described by a channel $h_{n,m}$ with $n\neq m$. The channels $\{h_{n,m}\}_{n,m\in\mathcal{N}}$ are random and have unknown distributions.

The utility function $u_n(\mathbf{x})$ for the $n$-th link is given by the average spectral efficiency at which the link can communicate with a penalty term that increases with the transmission power $x_n$ \cite{liang2007power}, i.e.,
\begin{align}
    u_n(\mathbf{x})&=\mathbbm{E}[\log(1+\text{SINR}_n)]-\xi_nx_n\nonumber\\&=\mathbbm{E}\Bigg[\log\bigg(1+\frac{|h_{n,n}|^2x_n}{\sigma^2_z+\sum_{m=1,m\neq n}^N|h_{m,n}|^2x_{m}}\bigg)\Bigg]\nonumber\\&\hspace{0.4cm}-\xi_nx_n\quad\text{[bps/Hz]},\label{eq: power control utility}
\end{align}
where $\sigma^2_z$ is the noise power and $\xi_n$ is a fixed factor dictating the relative importance of power consumption and transmission rate for link $n$. The spectral efficiency utility \eqref{eq: power control utility} is averaged over the unknown distributions of the channels.

\begin{figure}[t]

%\begin{minipage}[b]{1.0\linewidth}
  \centering
  \centerline{\includegraphics[scale=0.28]{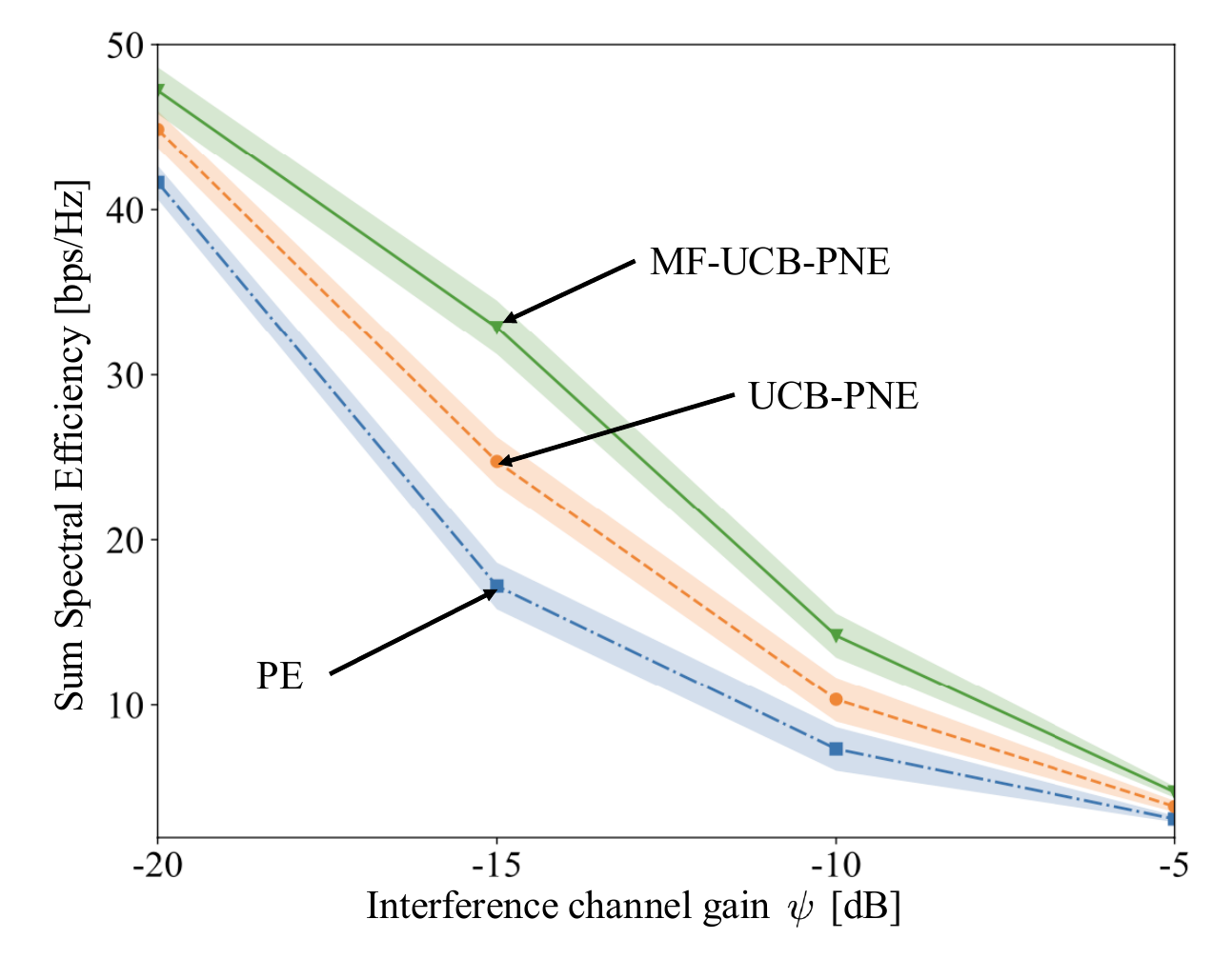}}
  \vspace{-0.4cm}
  \caption{Sum spectral efficiency against the interference channel gain $\psi$ for PE (blue dash-dotted line), UCB-PNE (orange dashed line), and MF-UCB-PNE with optimal threshold (green solid line). The total query budget is set to $\Lambda=36000$.}
  \label{fig: sse vs inter}
\vspace{-0.3cm}
\end{figure}

% \begin{figure}[t]

% %\begin{minipage}[b]{1.0\linewidth}
%   \centering
%   \centerline{\includegraphics[scale=0.38]{sse_budgets.pdf}}
%   \vspace{-0.4cm}
%   \caption{Sum spectral efficiency against the total query budget $\Lambda$ for PE (blue dash-dotted line), UCB-PNE (orange dashed line), and MF-UCB-PNE with threshold $\eta=0.4$ (green solid line) and optimal threshold $\eta$ (red solid line). }
%   \label{fig: sse vs budget}
% \vspace{-0.6cm}
% \end{figure}

The centralized optimizer can estimate the utilities \eqref{eq: power control utility} by observing $\lambda$ realizations of the random variable $\log(1+\text{SINR}_n)$. The cost is given by the number of realizations, and we allow for the options $\lambda^{(1)}=1, \lambda^{(2)}=10, \lambda^{(3)}=20, \lambda^{(4)}=50$ and $\lambda^{(5)}=100$. We assume $N=20$ links, each with normalized transmission power $-13$ dB $\leq x_n\leq 23$ dB, and noise power $\sigma_z^2=-20$ dB. We simulate Rayleigh fading direct channels $h_{n,n}\sim\mathcal{CN}(0,1)$ and interference channels $h_{n,m}\sim\mathcal{CN}(0,\psi)$ with $\psi=-20$ dB. The MOGP is initialized with parameters $h=0.89,\zeta^{(m)}=0.78$, and $\rho=0.768$. Note that the MOGP is not well-specified, since data generation does not follow an MOGP. For MF-UCB-PNE, we consider a fixed threshold $\eta=0.4$, as well as the optimal $\eta$ obtained in a similar manner to Fig. \ref{fig: regret vs budget}.

%Fig. \ref{fig: sse vs budget} shows the largest average sum-spectral efficiency attained during the optimization procedure for a penalty coefficient $\xi_n=0.1$ for all players. MF-UCB-PNE is seen to outperform the benchmarks across all the total query budgets. Furthermore,  as discussed in Sec. \ref{ssec: synthetic game}, a larger threshold $\eta$ is generally preferable when the query budget is sufficiently large, allowing more players to explore the maximum-fidelity utilities.

Fig.  \ref{fig: sse vs inter} shows the largest average sum-spectral efficiency attained during the optimization procedure as a function of the interference channel gain $\psi$ for a penalty coefficient $\xi_n=0.1$ for all players. At low interference levels, the utilities of the links are approximately decoupled, and all schemes attain similar performance. The performance gap between MF-UCB-PNE and the other benchmarks is evident in the regime of moderate interference gains, while the performance of all schemes degrade at higher interference levels, causing all transmitters to leverage the maximum transmission power.

\subsection{Optimizing a Random Access Protocol}\label{ssec: mac game}
Finally, we consider an optimization problem operating at the medium access layer for an ALOHA random access protocol. For each mobile terminal $n$, the central optimizer must select probabilities $\mathbf{x}_n=(x_{n,1},x_{n,2})$, with $x_{n,1}\in[0,1]$ describing the probability of mobile terminal $n$ being active and $x_{n,2}\in[0,1]$ denoting the probability of mobile terminal $n$ attempting a channel access \cite{gkatzikis2011medium}. Assuming a standard collision channel, the probability of a successful transmission for terminal $n$ is
\begin{align}
    T_n(\mathbf{x})=\frac{x_{n,1}x_{n,2}}{1-x_{n,1}x_{n,2}}\prod_{n'\in\mathcal{N}}(1-x_{n',1}x_{n',2}),\label{eq: mac utility}
\end{align}
while the average energy consumption is
\begin{align}
    E_n(\mathbf{x})=x_{n,1}(c_1+c_2x_{n,2}),\label{eq: energy consumption}
\end{align}
where $c_1$ and $c_2$ represent the energy consumption to be active and to transmit, respectively. The utility function is $u_n(\mathbf{x})=T_n(\mathbf{x})-\xi_nE_n(\mathbf{x})$, where $\xi_n>0$ represents the relative importance of the expected energy consumption $E_n(\mathbf{x})$ and throughput $T_n(\mathbf{x})$.

\begin{figure}[t]

%\begin{minipage}[b]{1.0\linewidth}
  \centering
  \centerline{\includegraphics[scale=0.28]{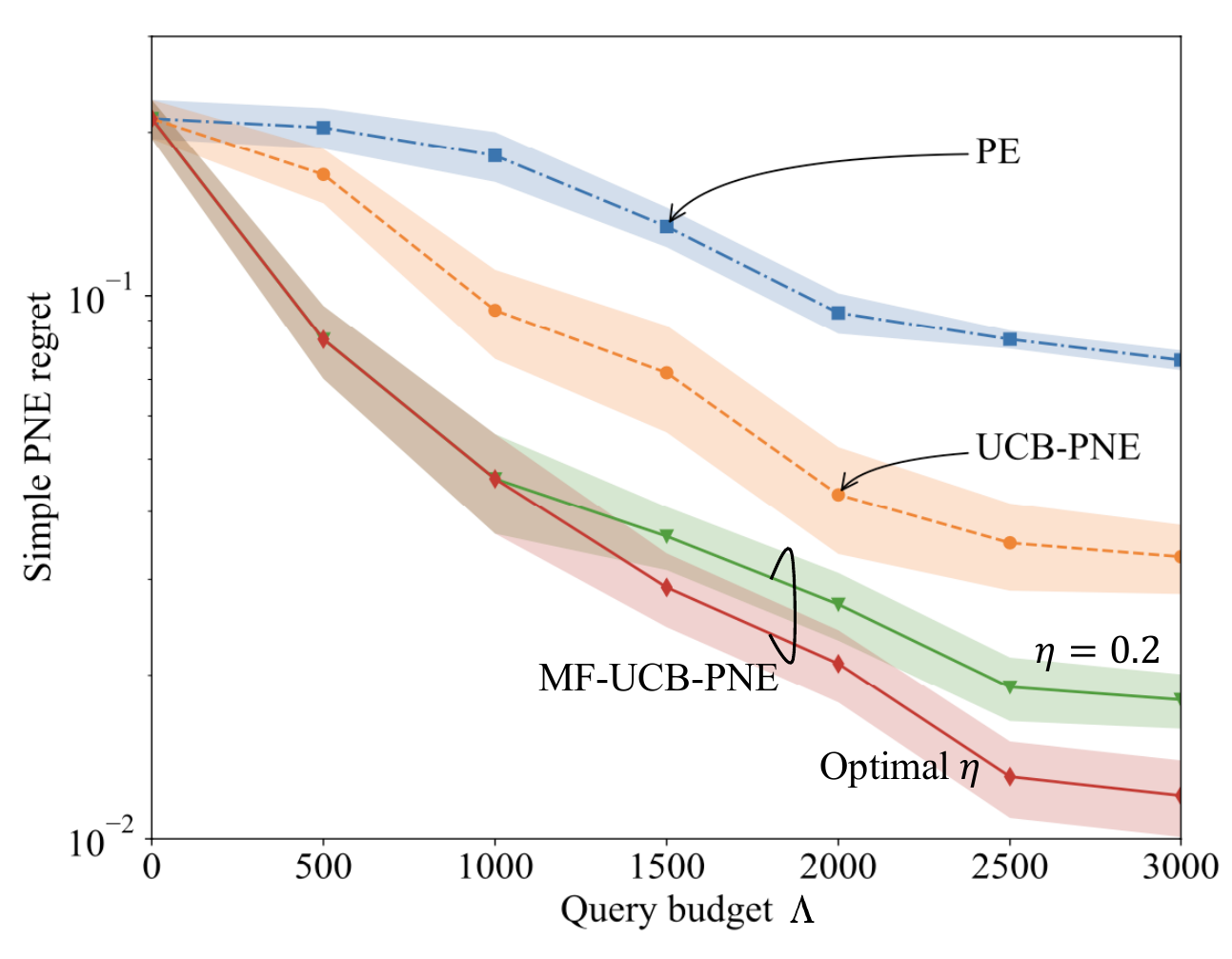}}
  \vspace{-0.4cm}
  \caption{Simple PNE regret \eqref{eq: simple regret} against the total query budget $\Lambda$ for PE (blue dash-dotted line), UCB-PNE (orange dashed line), and MF-UCB-PNE with threshold $\eta=0.2$ (green solid line) and optimal threshold $\eta$ (red solid line).}
  \label{fig: mac regret vs budget}
\vspace{-0.3cm}
\end{figure}

We consider a scenario of $N=5$ mobile terminals with energy consumptions $c_1=50$ units for staying active and $c_2=70$ units for transmission, while the maximum energy consumption constraints for the mobile terminals are set to $\{60, 55, 50, 45, 40\}$. The MOGP is initialized with parameters $h=1.08,\zeta^{(m)}=0.41,\rho=0.797$ for all players. And we set a threshold $\eta=0.2$ as well as the optimal $\eta$ for MF-UCB-PNE. We set $\xi_n=6.5\times 10^{-4}$ for all players, and low-fidelity evaluations of the utility function use the factors $\omega^{(1)}=4.9\times10^{-4}, \omega^{(2)}=5.5\times10^{-4}$, and $\omega^{(3)}=6.1\times10^{-4}$ instead, with the corresponding query costs $\lambda^{(1)}=1,\lambda^{(2)}=5,\lambda^{(3)}=10$ and $\lambda^{(4)}=20$. The equilibrium solution is computed by fixed-point iterations \cite{von2012newton}.

Fig. \ref{fig: mac regret vs budget} shows the simple PNE regret \eqref{eq: simple regret} as a function of the total query budget $\Lambda$. MF-UCB-PNE attains the best performance over the other benchmarks across all values of the total query budget. For example, with $\Lambda=3000$, MF-UCB-PNE achieves throughput levels $\{T_n\}_{n\in\mathcal{N}}=\{0.09187, 0.07232, 0.06236, 0.05229, 0.04627\}$ at energy levels $\{E_n\}_{n\in\mathcal{N}}=\{57, 55, 50, 45, 40\}$, while UCB attains throughput levels $\{T_n\}_{n\in\mathcal{N}}=\{0.07059, 0.06785, 0.06109, 0.05446, 0.04795\}$ with the maximum energy consumption. 

Moreover, MF-UCB-PNE exhibits an earlier and steeper decline in regret over low to moderate budgets, indicating that lower fidelity exploration has already localized promising regions before substantial high-fidelity queries with high costs. Near the terminal budget, MF-UCB-PNE shows a final sharp contraction as the remaining budget is concentrated on high-fidelity checks around the best solutions obtained so far, whereas UCB-PNE and PE tend to flatten after uniform high-fidelity sampling. This early localization trajectory followed by targeted certification aligns with the two-phase design and explains the lower regret achieved under the same total budget.

\section{Conclusion}\label{sec: conclusion}
The paper has introduced a centralized optimization framework for identifying approximate pure Nash equilibria (PNE) in complex multi-agent systems with black-box utility functions. The framework was devised within a learned optimization problem with an equilibrium constraint in which a central controller coordinates multiple strategic players. To tackle the challenge of learning unknown utilities through sequential queries, the paper presented MF-UCB-PNE, a novel multi-fidelity Bayesian optimization algorithm. MF-UCB-PNE explicitly balances exploration and exploitation across fidelity levels under stringent query cost constraints. By leveraging inexpensive low-fidelity evaluations for broad exploration and transitioning to high-fidelity queries for refined exploitation, the algorithm efficiently converges toward configurations that approximate PNE solutions while respecting budget limits. The paper provides a rigorous theoretical regret analysis, proving that MF-UCB-PNE achieves asymptotic no-regret performance under suitable conditions. Empirically, results show that the proposed approach consistently identifies high-quality equilibrium configurations within limited query budgets. 

Looking ahead, several promising directions for future research emerge from this work. One avenue is to generalize the framework beyond pure Nash equilibria to broader equilibrium concepts, such as mixed or correlated equilibria \cite{vlatakis2020no}. Extending the multi-fidelity approach to these scenarios could accommodate games where randomness or coordination among players yields more stable outcomes.  Furthermore, enhancing the scalability of the proposed method is important for applications to large-scale multi-agent systems involving more complex interaction structures and higher-dimensional strategy spaces \cite{maddox2021bayesian}.

\appendix

\subsection{Proof of Theorem \ref{theorem: regret bound}}\label{appendix: theorem 1 proof}
We start the proof of Theorem \ref{theorem: regret bound} by deriving an upper bound for the episode regret \eqref{eq: episode regret} as a function of the allocated query budget $\Lambda_j$. By Definitions \ref{definition: episode} and \ref{definition: episode regret}, we have the inequality
\begin{align}
    R(\mathcal{E}_j)=\frac{\Lambda_j}{N}\Big(1-\frac{\epsilon^*}{C}\Big)-\Big(1-\frac{\epsilon_j}{C}\Big).\label{eq: expansion of episode regret}
\end{align}
Assuming that MF-UCB-PNE terminates within $J$ episodes, by Definition \ref{definition: episode regret}, the regret upper bound can be then obtained as
\begin{align}
    R(\Lambda)&=\frac{\Lambda}{N}\Big(1-\frac{\epsilon^*}{C}\Big)-\sum_{j=1}^J\Big(1-\frac{\epsilon_j}{C}\Big)\tag{55a}\label{eq: regret line 1}\\&=\Big(\frac{\Lambda}{N}-J\Big)\Big(1-\frac{\epsilon^*}{C}\Big)+\sum_{j=1}^J\frac{\epsilon_j-\epsilon^*}{C}\tag{55b}\label{eq: regret line 3}\\&=\Big(\frac{1-\epsilon^*/C}{N}\Big)\sum_{j=1}^J\Lambda_{\mathcal{L}_{j}}+\sum_{j=1}^J\frac{\epsilon_j-\epsilon^*}{C}.\tag{55c}\label{eq: expanded cumulative regret}
\end{align}
The inequality \eqref{eq: regret line 1} is obtained by summing the episode regret in \eqref{eq: expansion of episode regret}; equation \eqref{eq: regret line 3} subtracts the best achievable dissatisfaction $J\epsilon^*$ from \eqref{eq: regret line 1} across $J$ episodes, and applies the associative law.
The first term in \eqref{eq: expanded cumulative regret} represents the regret incurred from decision sequences in exploration phases, while the second term in \eqref{eq: expanded cumulative regret} describes the regret obtained by optimizing the utility functions at the maximum fidelity level.

As shown in Algorithm \ref{table: lf-mi-max}, the exploration phase termination conditions \eqref{eq: exploring condition 2} and \eqref{eq: exploring condition 3} are checked by the optimizer sequentially at each time step before including the decision pair \eqref{eq: select exploring decision} in the sequence $\mathcal{L}_{j}$. Therefore, we can guarantee that the cumulative information gain-cost ratio induced on sequence $\mathcal{L}_{j}$ always satisfies the inequality
\begin{align}\tag{56}
    \frac{\sum_{n=1}^N\mathrm{I}(\mathbf{y}_{n,j,t_j};\{u_n(\mathbf{x})\}_{\mathbf{x}\in\mathcal{L}_{j}}|\mathcal{L}_{j},\mathcal{D}_{n,j,0})}{\Lambda_{\mathcal{L}_{j}}}\geq\frac{1}{\sqrt{\Lambda^{\text{tot}}_j}},\label{eq: stopping condition}
\end{align}
where the $t_j\times 1$ vector $\mathbf{y}_{n,j,t_j}$ includes the utility observations obtained on sequence $\mathcal{L}_{j}$ performed by player $n$. Therefore, the cumulated query cost in \eqref{eq: expanded cumulative regret} can be upper bounded as
\begin{align}
    \sum_{j=1}^J\Lambda_{\mathcal{L}_{j}}&\leq\sum_{j=1}^J\frac{\sum_{n=1}^N\mathrm{I}(\mathbf{y}_{n,j,t_j};\{u_n(\mathbf{x})\}_{\mathbf{x}\in\mathcal{L}_{j}}|\mathcal{L}_{j},\mathcal{D}_{n,j,0})}{\frac{1}{\sqrt{\Lambda^{\text{tot}}_j}}}\nonumber\\&\leq N\sqrt{\Lambda}\gamma_{J},\tag{57}\label{eq: low fidelity regret upper bound}
\end{align}
with the maximal information gain $\gamma_{J}=\max_{n\in\mathcal{N}}\gamma_{n,J}$. 
% \begin{align}
%     \gamma_{\mathcal{L}}=\sum_{j=1}^J\sum_{n=1}^N\mathrm{I}(\mathbf{y}_{n,j,t_j};\{u_{n}(\mathbf{x})\}_{\mathbf{x}\in\mathcal{L}_{j}}|\mathcal{L}_j,\mathcal{D}_{n,j,0}).
% \end{align}

Let us denote $\beta_{J}=\max_{n\in\mathcal{N}}\beta_{n,J}$ as the maximum scaling parameter among all players. Then the upper bound for the second term in \eqref{eq: expanded cumulative regret} can be obtained as
\begin{align}
    &\sum_{j=1}^J(\epsilon_j-\epsilon)\nonumber\\&=\sum_{j=1}^J\Big[\max_{n\in\mathcal{N}}f_n(\mathbf{x}_{j,t_j+1})-\max_{n'\in\mathcal{N}}f_{n'}(\mathbf{x}^*)\Big]\tag{58a}\\&\leq\sum_{j=1}^J\Big[\max_{n\in\mathcal{N}}\hat{f}_{n,j}(\mathbf{x}_{j,t_j+1})-\max_{n'\in\mathcal{N}}\check{f}_{n',j}(\mathbf{x}_{j,t_j+1})\Big]\tag{58b}\label{eq: epsilon diff line 4}\\&\leq\sum_{j=1}^J\Big[\hat{f}_{n_j,j}(\mathbf{x}_{j,t_j+1})-\check{f}_{n_j,j}(\mathbf{x}_{j,t_j+1})\Big]\tag{58c}\label{eq: epsilon diff line 5}\\&\leq\sum_{j=1}^J2\beta_{n,j}\Big(\sigma^{(M)}_{n,j,t_j}(\mathbf{x}_{j,t_j+1})+\sigma^{(M)}_{n,j,t_j}(\mathbf{x}^{(n_j)})\Big)\tag{58d}\label{eq: from lemma 1}\\&\leq 4\beta_{J}\Bigg(\sum_{j=1}^J\max\Big(\sigma^{(M)}_{n,j,t_j}(\mathbf{x}_{j,t_j+1}),\sigma^{(M)}_{n,j,t_j}(\mathbf{x}^{(n_j)})\Big)\Bigg)\tag{58e}\\&\leq 4\beta_{J}\sqrt{4(J+2)\gamma_{J}},\tag{58f}\label{eq: sfbo upper bound}
\end{align}
where the inequality \eqref{eq: epsilon diff line 4} is obtained by introducing the upper and lower confidence intervals obtained on the selected action profile $\mathbf{x}_{j,t_j+1}$ instead of the $\epsilon^*$-PNE solution $\mathbf{x}^*$; \eqref{eq: epsilon diff line 5} is obtained by aligning the player index which has the maximum dissatisfaction upper confidence bound as in \eqref{eq: worst player}; \eqref{eq: from lemma 1} is obtained by applying the union bound to Lemma \ref{lemma: f interval}; and the inequality \eqref{eq: sfbo upper bound} is obtained from \cite[Lemma 4]{chowdhury2017kernelized}. 

Therefore, combining \eqref{eq: low fidelity regret upper bound} and \eqref{eq: sfbo upper bound}, we get the upper bound for the regret $R(\Lambda)$ as in \eqref{eq: regret upper bound}, concluding the proof.

\subsection{Proof of Corollary \ref{corollary: no regret}}\label{appendix: corollary proof}
We start the proof of Corollary \ref{corollary: no regret} by showing that the maximal information gain terms in Theorem \ref{theorem: regret bound} grows sub-linearly with the query budget $\Lambda$. According to the selection subroutine in Algorithm \ref{table: lf-mi-max}, the maximal mutual information at the first time step is obtained at the highest fidelity level, i.e.,
\begin{align}
    &\max\limits_{\substack{\mathbf{x}\in\mathcal{X}\\m_n\in\mathcal{M}}}\{\mathrm{I}(y_n^{(m_n)};u_n(\mathbf{x})|\mathbf{x},m_n)=\frac{1}{2}\ln(1+\frac{1}{1+[\sigma_{n,0}^{(m_n)}(\mathbf{x})]^2})\}\nonumber\\&=\max\limits_{\mathbf{x}\in\mathcal{X}}\{\mathrm{I}(y_n^{(M)};u_n(\mathbf{x})|\mathbf{x},M)\}.\tag{59}\label{eq: first step mi}
\end{align}
When the fidelity decision at each time step is selected as $M$, we only perform the evaluation phase. For brevity of notation, we omit the episode index and let $\mathbf{M}_{n,t-1}=M\cdot\mathbf{1}$ denote the fidelity decisions for all the previous $t-1$ time steps in the optimization process. The maximal mutual information in \eqref{eq: first step mi} can be then inductively expressed as
\begin{align}
    \max\limits_{\substack{\mathbf{x}\in\mathcal{X}\\m_n\in\mathcal{M}}}\Bigg\{&\mathrm{I}(y_n^{(m_n)};u_n(\mathbf{x})|\mathbf{x},m_n,\mathbf{X}_{t-1},\mathbf{M}_{n,t-1})\nonumber\\&=\frac{1}{2}\ln\bigg(1+\frac{[\sigma_{n,t-1}^{(M)}(\mathbf{x})]^2}{[\sigma_{n,t-1}^{(M)}(\mathbf{x})]^2+[\sigma_{n,t-1}^{(m_n)}(\mathbf{x})]^2}\bigg)\Bigg\},\tag{60}\label{eq: inductive mi}
\end{align}
which indicates that the maximal mutual information \eqref{eq: inductive mi} is obtained by always querying the action profiles at the maximum-fidelity level $M$. Since the maximal information gain $\gamma_{n,t}$ satisfies submodularity, a diminishing returns property \cite{krause2005near}, it can be upper bounded via a greedy procedure at each time \cite[Section V.A]{srinivas2012information}, i.e.,
\begin{align}
    \gamma_{n,t}&=\max\limits_{\substack{\mathbf{X}_{t}\subset\mathcal{X}\\\mathbf{M}_{n,t}\subset\mathcal{M}}}\mathrm{I}(\mathbf{y}_{t};\{u_n(\mathbf{x})\}_{\mathbf{x}\in\mathbf{X}_t}|\mathbf{X}_{t},\mathbf{M}_{n,t})\nonumber\\&\leq\frac{1}{2}\bigg(1-\frac{1}{e}\bigg)^{-1}\sum_{t'=1}^{t}\max\limits_{\substack{\mathbf{x}_{t'}\in\mathcal{X}\\m_{n,t'}\in\mathcal{M}}} \mathrm{I}(y_{n,t'}^{(m_{n,t'})};u_n(\mathbf{x}_{t'})\nonumber\\&\hspace{0.5cm}|\mathbf{x}_{t'},m_{n,t'},\mathbf{X}_{t'-1},\mathbf{M}_{n,t'-1}).\tag{61}\label{eq: chain rule}
\end{align}
The inequality \eqref{eq: chain rule}, obtained by applying the chain rule of mutual information to \eqref{eq: inductive mi}, shows that the maximal information gain $\gamma_{n,t}$ is accrued by performing greedy mutual information maximization over action profile $\mathbf{x}_{t'}$ with highest fidelity level $m_{n,t'}=M$ at each time step $t'\geq 1$.

Given the selected RBF kernel in MOGP, the maximal information gain is upper bounded by $\gamma_{n,t}=\mathcal{O}(\log (t)^{d+1})$ \cite[Theorem 5]{srinivas2012information}

Therefore, denoting the overall number of time steps as $T_J=\sum_{j=1}^J(t_j+1)$ and plugging the maximal information gain upper bound into the regret upper bound \eqref{eq: regret upper bound}, we get
\begin{align}
    R(\Lambda)&\leq\bigg(1-\frac{\epsilon^*}{C}\bigg)\sqrt{\Lambda}\cdot\gamma_{J}+\frac{4\beta_{J}}{C}\sqrt{4(J+2)\gamma_{J}}\tag{62a}\\&\leq \mathcal{O}(\sqrt{\Lambda}\log(T_J)^{d+1})+\mathcal{O}(\sqrt{J}\log(T_J)^{d+1})\tag{62b}\label{eq: no regret line 2}\\&\leq\mathcal{O}\bigg((\sqrt{J}+\sqrt{\Lambda})\log\Big(\frac{\Lambda-J\lambda^{(M)}}{\lambda^{(1)}}+J\Big)^{d+1}\bigg),\tag{62c}\label{eq: final O}
\end{align}
where inequality \eqref{eq: no regret line 2} is obtained by using $\beta_J=\mathcal{O}(\sqrt{\gamma_J})$, and applying the maximal information gain upper bound; and inequality \eqref{eq: final O} is obtained by considering the maximum number of steps one can query given budget $\Lambda$.
The upper bound \eqref{eq: final O} satisfies the limit \eqref{eq: no regret}, hence, MF-UCB-PNE incurs asymptotic no-regret performance.

%{\appendices
%\section*{Proof of the First Zonklar Equation}
%Appendix one text goes here.
% You can choose not to have a title for an appendix if you want by leaving the argument blank
%\section*{Proof of the Second Zonklar Equation}
%Appendix two text goes here.}

 % argument is your BibTeX string definitions and bibliography database(s)
%\bibliography{IEEEabrv,../bib/paper}
%

% \section{Biography Section}
% If you have an EPS/PDF photo (graphicx package needed), extra braces are
%  needed around the contents of the optional argument to biography to prevent
%  the LaTeX parser from getting confused when it sees the complicated
%  $\backslash${\tt{includegraphics}} command within an optional argument. (You can create
%  your own custom macro containing the $\backslash${\tt{includegraphics}} command to make things
%  simpler here.)
 
% \vspace{11pt}

% \bf{If you include a photo:}\vspace{-33pt}
% \begin{IEEEbiography}[{\includegraphics[width=1in,height=1.25in,clip,keepaspectratio]{fig1}}]{Michael Shell}
% Use $\backslash${\tt{begin\{IEEEbiography\}}} and then for the 1st argument use $\backslash${\tt{includegraphics}} to declare and link the author photo.
% Use the author name as the 3rd argument followed by the biography text.
% \end{IEEEbiography}

\bibliographystyle{ieeetr}
\bibliography{refer}

\vspace{11pt}

% \bf{If you will not include a photo:}\vspace{-33pt}
% \begin{IEEEbiographynophoto}{John Doe}
% Use $\backslash${\tt{begin\{IEEEbiographynophoto\}}} and the author name as the argument followed by the biography text.
% \end{IEEEbiographynophoto}

\vfill

\end{document}